\documentclass[12pt]{article}

\usepackage[usenames]{color}
\usepackage{amssymb}
\usepackage{graphicx}
\usepackage{amscd}
\usepackage{tabularx}
\usepackage{lscape}

\usepackage[colorlinks=true,
linkcolor=webgreen,
filecolor=webbrown,
citecolor=webgreen]{hyperref}

\definecolor{webgreen}{rgb}{0,.5,0}
\definecolor{webbrown}{rgb}{.6,0,0}

\usepackage{color}
\usepackage{fullpage}
\usepackage{float}

\usepackage{graphics,amsmath,amssymb}
\usepackage{amsthm}
\usepackage{amsfonts}
\usepackage{latexsym}
\usepackage{epsf}

\DeclareMathOperator{\crep}{crep}
\DeclareMathOperator{\prefgeab}{prefgeab}
\DeclareMathOperator{\prefgtab}{prefgtab}
\DeclareMathOperator{\prefeqab}{prefeqab}

\DeclareMathOperator{\facgeab}{facgeab}
\DeclareMathOperator{\facgtab}{facgtab}
\DeclareMathOperator{\faceqab}{faceqab}
\DeclareMathOperator{\facab}{facab}
\DeclareMathOperator{\facsmallab}{facsmallab}
\DeclareMathOperator{\faclargeab}{faclargeab}

\DeclareMathOperator{\conj}{conj}
\DeclareMathOperator{\cexp}{ce}
\DeclareMathOperator{\ccexp}{cce}
\DeclareMathOperator{\lcce}{lcce}
\DeclareMathOperator{\gcce}{gcce}
\DeclareMathOperator{\ace}{ace}

\def\codelink{\url{https://cs.uwaterloo.ca/~shallit/papers.html}}

\author{Jeffrey Shallit and Ramin Zarifi \\
School of Computer Science \\
University of Waterloo \\
Waterloo, ON  N2L 3G1 \\
Canada \\
{\tt shallit@uwaterloo.ca} \\
{\tt rzarifi@edu.uwaterloo.ca}}

\title{Circular critical exponents for Thue-Morse factors}

\begin{document}

\maketitle

\begin{abstract}
We prove various results about the largest exponent of a repetition
in a factor of the Thue-Morse word, when that factor is considered
as a circular word.  Our results confirm and generalize previous
results of Fitzpatrick and Aberkane \& Currie.  
\end{abstract}

\theoremstyle{plain}
\newtheorem{theorem}{Theorem}
\newtheorem{corollary}[theorem]{Corollary}
\newtheorem{lemma}[theorem]{Lemma}
\newtheorem{proposition}[theorem]{Proposition}

\theoremstyle{definition}
\newtheorem{definition}[theorem]{Definition}
\newtheorem{example}[theorem]{Example}
\newtheorem{conjecture}[theorem]{Conjecture}

\theoremstyle{remark}
\newtheorem{remark}[theorem]{Remark}

\section{Introduction}

Consider the English
word {\tt amalgam}; it has a factor\footnote{A {\it factor} is a contiguous
block lying inside another word.} {\tt ama} of period $2$ and length
$3$, so we can consider {\tt ama} to be a $3\over 2$ power.  However, if we think
of {\tt amalgam} as a ``circular word'' or ``necklace'', where the word ``wraps around'',
then it has the factor {\tt amama} of period $2$ and length $5$.
We say that {\tt amalgam} has a circular critical exponent of $5 \over 2$.

The famous Thue-Morse infinite word 
$${\bf t} = t_0 t_1 t_2 \cdots = {\tt 01101001} \cdots $$
has been studied extensively since its introduction by
Thue in 1912 \cite{Thue:1912,Berstel:1992}.
In particular, Thue proved that the largest repetitions
in $\bf t$ are $2$-powers (also called ``squares'').  

It was only fairly recently,
however, that the repetitive properties of its factors, 
{\it considered as circular words}, have been studied.  Fitzpatrick \cite{Fitzpatrick:2005} showed that, for all
$n \geq 1$, there is a length-$n$ factor of $\bf t$ with circular critical
exponent $< 3$.  Aberkane and Currie \cite{Aberkane&Currie:2004} conjectured that
for every $n \geq 1$, some length-$n$ factor of $\bf t$ has circular critical
exponent $\leq {5 \over 2}$, and, using a case analysis, they later proved this conjecture
\cite{Aberkane&Currie:2005a}.

In this paper we show how to obtain the Aberkane-Currie result, and much more,
using an approach based on first-order logic and the {\tt Walnut} prover, written
by Hamoon Mousavi.

\section{Basics}

The $i$'th letter of a word $w$ is written $w[i]$.  The notation
$w[i..j]$ represents the word 
$$w[i] w[i+1] \cdots w[j].$$
If $i \geq j+1$, then $w[i..j] = \epsilon$, the empty word.

An infinite (resp., nonempty finite) word $w$ has a period $p \geq 1$ if
$w[i] = w[i+p]$ for all $i \geq 0$ (resp., all $i$ with $0 \leq i <
|w|-p$).  For finite words of length $n$, we restrict our attention to 
periods that are $\leq n$.
A word can have multiple periods; for example, the 
English word
{\tt alfalfa} has periods $3, 6$, and $7$.
The smallest period is called {\it the\/} period and is denoted
$p(w)$.  The
{\it exponent} of a finite word $w$ is defined to be $\exp(w) = |w|/p(w)$;
it measures the largest amount of (fractional) repetition of a word.
The period of 
{\tt alfalfa} is $3$, and it has length $7$; hence
its exponent is ${7 \over 3}$.

A word is called a {\it square} if its exponent is $2$.  If its exponent
is greater than $2$, it is called an {\it overlap}.  Thus, for example,
the English word {\tt murmur} is a square and the French word {\tt entente} is an overlap.

The {\it critical exponent} of a word $w$ is the supremum, over all
finite nonempty factors $x$ of $w$, of $\exp(x)$; it is denoted
$\cexp(w)$.  For example, {\tt Mississippi} has critical
exponent $7/3$, arising from the overlap {\tt ississi}.   

We can also define this notion for ``circular words'' (aka ``necklaces'').
We say two words $x,y$ are {\it conjugate} if one is a cyclic shift of the 
other; alternatively, if there exist (possibly empty) words $u, v$ such that $x = uv$ and $y = vu$.  For example, the English words {\tt listen} and
{\tt enlist} are conjugates.

We let $\conj(w)$ denote the set of all cyclic shifts of $w$:
$$ \conj(w) = \{ yx \ : \ \exists x, y \text{ such that } w = xy \}.$$
For example, the conjugates of {\tt ate} are $\{ {\tt ate}, {\tt tea}, {\tt eat} \}$.

Here is the most fundamental definition of our paper:
\begin{definition}
The {\it circular critical exponent} of a word $w$, denoted by $\ccexp(w)$,
is the supremum of $\exp(x)$ over all finite nonempty factors $x$
of all conjugates of $w$.  
\end{definition}

Note that $\ccexp (w)$ can be as much as twice as large as $\cexp(w)$.
See \cite{Mousavi&Shallit:2013} for more about this notion for infinite
words.

\subsection{The Thue-Morse word}

The Thue-Morse word $\bf t$
has many equivalent definitions \cite{Allouche&Shallit:1999},
but for us it will be sufficient
to describe it as the fixed point, starting with $\tt 0$, of
the morphism $\mu$ mapping ${\tt 0} \rightarrow {\tt 01}$ and
${\tt 1} \rightarrow {\tt 10}$.

A basic fact about the binary alphabet is that every word of length
$\geq 4$ has critical exponent at least $2$.
Thue proved that the Thue-Morse word has no overlaps.  Thus we get
the following (trivial) result about factors of the Thue-Morse word.

\begin{proposition}
Let $x$ be a nonempty factor of the Thue-Morse word.  Then
$\cexp(x) \in \{ 1, {3 \over 2}, 2 \}$.  Furthermore,
$\cexp(x) = 2$ if $|x| \geq 4$.
\label{prop1}
\end{proposition}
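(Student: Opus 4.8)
The plan is to reduce everything to two facts already available in the excerpt: that $\bf t$ is overlap-free, so that every factor of $\bf t$ (in particular every factor of $x$) has exponent at most $2$; and that over a binary alphabet every word of length at least $4$ has a factor of exponent at least $2$. Together these pin $\cexp(x)$ down completely once $|x|\ge 4$, and a finite check disposes of the remaining short words. So I would split the argument on whether or not $|x|\ge 4$.

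First, suppose $|x|\ge 4$. Every factor $y$ of $x$ is a factor of $\bf t$, hence $\exp(y)=|y|/p(y)\le 2$ since $\bf t$ contains no overlap; as $x$ has only finitely many factors, taking the maximum gives $\cexp(x)\le 2$. Conversely, since $x$ is a binary word of length at least $4$, it contains a factor of exponent at least $2$, so $\cexp(x)\ge 2$. Hence $\cexp(x)=2$, which proves the ``furthermore'' clause and in particular places $\cexp(x)$ in $\{1,3/2,2\}$ in this range.

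Next, suppose $|x|\le 3$. Here I would simply enumerate. Every factor $y$ of $x$ has length at most $3$, so its period $p(y)$ lies in $\{1,2,3\}$ and $\exp(y)=|y|/p(y)\in\{1,3/2,2,3\}$; the value $3$ forces $y\in\{{\tt 000},{\tt 111}\}$, which cannot occur in the overlap-free word $\bf t$ (equivalently, it would contradict $\cexp(x)\le 2$, obtained exactly as above). Since $\cexp(x)$ is the maximum of $\exp(y)$ over the factors $y$ of $x$, it must lie in $\{1,3/2,2\}$, completing the proof. The only point needing mild care is that $\cexp$ is a maximum over \emph{all} factors of $x$, not merely over $x$ itself: for instance $\cexp({\tt 010})=3/2$, whereas $\cexp({\tt 001})=2$ because of the sub-factor ${\tt 00}$.

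Candidly, there is no real obstacle here: the proposition is essentially a repackaging of Thue's overlap-freeness together with the elementary binary-alphabet observation, and the only ``work'' is the finite bookkeeping for short $x$. If one wanted to avoid even that bookkeeping, one could instead note that each of $1$, $3/2$, $2$ is realized by an explicit short factor (namely ${\tt 0}$, ${\tt 010}$, and ${\tt 00}$), and that the denominator argument above rules out every other possible value.
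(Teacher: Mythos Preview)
Your argument is correct and is exactly the approach the paper takes: the paper does not write out a proof at all but simply declares the proposition a ``trivial'' consequence of Thue's overlap-freeness of $\bf t$ together with the stated fact that every binary word of length $\geq 4$ has critical exponent at least $2$. You have faithfully unpacked that one-line justification, including the small case analysis for $|x|\le 3$.
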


In this paper, we prove the analogue of Proposition~\ref{prop1} for
the circular critical exponent.  Here the statement is more complicated
and the analysis more difficult.

\subsection{{\tt Walnut}}

Our main software tool is the {\tt Walnut} prover, written by 
Hamoon Mousavi \cite{Mousavi:2016}.  This Java program deals with
deterministic finite automata with output (DFAO's) and
$k$-automatic sequences $(a_n)_{n \geq 0}$.  A $k$-DFAO is a finite-state machine
$M = (Q, \Sigma_k, \delta, q_0, \Delta, \tau)$, where $Q$ is a finite nonempty
set of states, $\Sigma_k = \{0,1,\ldots, k-1\}$ is the input alphabet,
$\delta:Q \times \Sigma_k \rightarrow Q$ is the transition function
(which is extended to $Q \times \Sigma_k^*$ in the obvious way),
$q_0$ is the initial state, $\Delta$ is the output alphabet, and
$\tau:Q \rightarrow \Delta$ is the output mapping.  DFAO's are an obvious
generalization of ordinary DFA's.  A sequence $(a_n)_{n \geq 0}$ is said
to be {\it computed by the $k$-DFAO $M$\/}
if $\tau(\delta(q_0, (n)_k)) = a_n$, where $(n)_k$ denotes
the base-$k$ representation of $n$.  (Unless otherwise stated, we assume
that all automata read the base-$k$ representation of $n$ from left to right,
starting with the most significant digit.)  If a sequence $(a_n)_{n \geq 0}$ is
computed by a $k$-DFAO, it is said to be {\it $k$-automatic}.

{\tt Walnut} can evaluate
the truth of a first-order statement $S$ involving indexing of $k$-automatic sequences,
logical connectives, and quantifiers $\exists$ and $\forall$.  If there are free
variables, it produces an automaton accepting the base-$k$ representation of 
the values of the free variables for which $S$ evaluates to true.  One minor
technical point is that the automata it produces give the correct answer,
even when the input is prefixed by any number of leading zeroes.

The syntax of {\tt Walnut} statements is more or less self-explanatory.
The interested reader can enter the {\tt Walnut} commands we give and
directly reproduce our results.

All computations in this paper, unless otherwise indicated, were performed
on an Apple MacBook Pro with 16 GB of memory,
running macOS High Sierra, version 10.13.3.    All the code we discuss
is available for download at \codelink \, .  For the Thue-Morse word,
the computations all run in a matter of seconds.

\subsection{Minimality}

There is a notion of minimality for DFAO's that exactly parallels the notion of
ordinary DFA's.  We say that two states $p,q$ of a $k$-DFAO are {\it distinguishable\/}
if there exists a string $x \in \Sigma_k^*$ such that $\tau(\delta(p,x)) \not=
\tau(\delta(q,x))$.  Then the analogue of the Myhill-Nerode theorem for 
DFAO's is the following, which is easily proved:
\begin{proposition}
There is a unique minimal $k$-DFAO equivalent to any given $k$-DFAO.  Furthermore,
a $k$-DFAO $M$ is minimal iff (a) every state of $M$ is reachable from the start
state and (b) every pair of distinct states is distinguishable.
\end{proposition}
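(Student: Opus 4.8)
The plan is to mimic the classical Myhill--Nerode development for DFA's, replacing the phrase ``language accepted'' by ``function computed.'' To a $k$-DFAO $M = (Q, \Sigma_k, \delta, q_0, \Delta, \tau)$ associate the function $f_M : \Sigma_k^* \to \Delta$ given by $f_M(x) = \tau(\delta(q_0, x))$; two DFAO's are equivalent precisely when $f_M = f_{M'}$. Given any $f : \Sigma_k^* \to \Delta$, define $x \sim_f y$ to hold iff $f(xz) = f(yz)$ for all $z \in \Sigma_k^*$. It is immediate that $\sim_f$ is a right congruence: $x \sim_f y$ implies $xa \sim_f ya$ for all $a \in \Sigma_k$. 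When $f = f_M$ with $|Q| = n$, the relation $x \equiv y \iff \delta(q_0,x) = \delta(q_0,y)$ has at most $n$ classes and refines $\sim_f$ (equal states force equal outputs on every continuation), so $\sim_f$ has index at most $n$; in particular $\sim_f$ has finite index whenever $f$ is computed by some DFAO.

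First I would build the canonical DFAO $M_f$: its states are the (finitely many) $\sim_f$-classes; the transition on letter $a$ sends $[x]$ to $[xa]$, which is well-defined by the congruence property; the start state is $[\epsilon]$; and the output sends $[x]$ to $f(x)$, which is well-defined because $x \sim_f y$ with $z = \epsilon$ gives $f(x) = f(y)$. A routine induction on $|x|$ shows that in $M_f$ one is carried from $[\epsilon]$ to $[x]$ by reading $x$, so $f_{M_f} = f$. Moreover every state $[x]$ of $M_f$ is reachable (read $x$), and distinct states $[x] \ne [y]$ are distinguishable: by definition of $\sim_f$ there is a $z$ with $f(xz) \ne f(yz)$, and reading $z$ from $[x]$ versus $[y]$ then lands in states with outputs $f(xz) \ne f(yz)$. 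Hence $M_f$ satisfies conditions (a) and (b).

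Next I would show that any DFAO $M$ equivalent to $M_f$ and satisfying (a) and (b) is isomorphic to $M_f$, which gives both uniqueness and minimality of $M_f$. By (a), each $q \in Q$ equals $\delta(q_0,x)$ for some $x$; set $\varphi(q) = [x]$. This is well-defined --- if $\delta(q_0,x) = \delta(q_0,y) = q$ then $f(xz) = \tau(\delta(q,z)) = f(yz)$ for every $z$, so $x \sim_f y$ --- surjective (the class $[x]$ is $\varphi(\delta(q_0,x))$), and injective by (b): $\varphi(q_1) = \varphi(q_2)$ forces $q_1$ and $q_2$ to agree on every output $\tau(\delta(\cdot,z))$, i.e.\ to be indistinguishable, hence $q_1 = q_2$. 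One then checks that $\varphi$ carries $q_0$ to $[\epsilon]$ and commutes with transitions and with outputs, so $\varphi$ is a DFAO-isomorphism. In particular $|Q|$ equals the index of $\sim_f$; and since in any DFAO for $f$ the reachable states already surject onto the $\sim_f$-classes (by the argument in the first paragraph), every equivalent DFAO has at least that many states. So $M_f$ is a minimal DFAO for $f$, unique up to isomorphism.

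It remains to establish the ``iff.'' If $M$ fails (a), deleting the unreachable states leaves $f_M$ unchanged but decreases $|Q|$, so $M$ is not minimal; if $M$ fails (b), choose an indistinguishable pair $p \ne q$ and pass to the quotient DFAO obtained by identifying $p$ with $q$, which preserves $f_M$ and again decreases $|Q|$. Conversely, if $M$ satisfies (a) and (b), then by the previous paragraph $|Q|$ is already the minimum possible, so $M$ is minimal. The one point requiring genuine care is that this quotient is well-defined: one must check that indistinguishability is itself a congruence --- indistinguishable $p,q$ satisfy $\tau(p) = \tau(q)$ and have indistinguishable successors $\delta(p,a)$, $\delta(q,a)$ for each $a$ --- so that the transitions and output of the identified DFAO are unambiguous. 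I would flag this congruence check as the crux of the argument, though it follows directly from the definitions.
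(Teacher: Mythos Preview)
The paper does not actually prove this proposition; it simply states it as the DFAO analogue of the Myhill--Nerode theorem and remarks that it ``is easily proved.'' Your proposal supplies precisely such a proof, and it is correct and complete.

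One phrasing worth tightening: in the failure-of-(b) case you write ``pass to the quotient DFAO obtained by identifying $p$ with $q$,'' but merging a single indistinguishable pair need not yield a well-defined automaton, since $\delta(p,a)$ and $\delta(q,a)$ may be distinct (merely indistinguishable) states. What you actually need---and what your congruence remark already justifies---is to quotient by the full indistinguishability relation, which collapses at least the pair $p,q$ and hence strictly reduces the state count. Alternatively, you can avoid the quotient construction altogether: under (a) the map $q \mapsto [x]$ (for any $x$ with $\delta(q_0,x)=q$) is a well-defined surjection from $Q$ onto the $\sim_f$-classes, and failure of (b) makes it non-injective, so $|Q|$ strictly exceeds the index of $\sim_f$ and $M$ is not minimal. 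Either fix is routine; the core argument stands.
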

\noindent We observe that the automata that {\tt Walnut} computes are guaranteed to
be minimal.

We will need the following lemma.

\begin{lemma}
Let $M_1 = (Q_1, \Sigma_k, \delta_1, q_1, \Delta_1, \tau_1)$ and
$M_2 = (Q_2, \Sigma_k, \delta_2, q_2, \Delta_2, \tau_2)$ be two minimal DFAO's.
Let 
$$ M = (Q, \Sigma_k \times \Sigma_k, \delta, q_0, \Delta_1 \times \Delta_2, \tau)$$
be the cross product automaton defined by
\begin{itemize}
\item $Q = Q_1 \times Q_2$;
\item $\delta([p,q], a) = [\delta_1(p,a), \delta_2(q,a)]$;
\item $q_0 = [q_1, q_2]$;
\item $\tau([p,q]) = [\tau_1(p), \tau_2 (q)]$.
\end{itemize}
Then every pair of distinct states of $M$ is distinguishable.
\end{lemma}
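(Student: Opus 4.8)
The plan is to reduce the claim to the distinguishability property already guaranteed for each factor automaton $M_1$ and $M_2$, exploiting the fact that $M$ reads pairs of symbols of equal length and that we are free to pad the coordinate we do not care about.

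First I would fix two distinct states of $M$, say $[p,q]$ and $[p',q']$. Since the pair $(p,q)$ differs from $(p',q')$, at least one of $p \neq p'$ or $q \neq q'$ holds; the construction below is entirely symmetric in the two factors, so it suffices to treat the case $p \neq p'$.

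Next, since $M_1$ is minimal, its distinct states $p$ and $p'$ are distinguishable: there is a word $y \in \Sigma_k^*$ with $\tau_1(\delta_1(p,y)) \neq \tau_1(\delta_1(p',y))$. Set $n = |y|$ and let $x \in (\Sigma_k \times \Sigma_k)^*$ be the word of length $n$ over the product alphabet whose first-coordinate projection is $y$ and whose second-coordinate projection is, say, $0^n$; any word of length $n$ will do in the second coordinate. By the coordinatewise definition of $\delta$ we get $\delta([p,q],x) = [\delta_1(p,y), \delta_2(q,0^n)]$ and $\delta([p',q'],x) = [\delta_1(p',y), \delta_2(q',0^n)]$, and then by the definition of $\tau$ these two states have outputs that already differ in their first component. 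Hence $\tau(\delta([p,q],x)) \neq \tau(\delta([p',q'],x))$, so $x$ distinguishes the two states. The case $q \neq q'$ is handled identically, invoking minimality of $M_2$ and padding the first coordinate with $0^n$ instead.

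The only point that needs any care — and the closest thing to an obstacle — is the bookkeeping forced by the product input alphabet $\Sigma_k \times \Sigma_k$: a distinguishing word for one coordinate has to be extended to an equal-length word in the other coordinate, and one must check that this padding cannot accidentally re-align the two outputs. It cannot, precisely because the outputs already disagree in the coordinate we controlled, whatever happens in the padded coordinate. (Note that full minimality of $M_1$ and $M_2$ is not actually needed here; only condition (b) of the Myhill-Nerode characterization, that distinct states are distinguishable, is used.)
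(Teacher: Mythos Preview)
Your argument is correct and follows the same route as the paper's: pick a coordinate in which the two product states differ, invoke minimality of that factor to get a distinguishing word, and note that the product output already disagrees in that coordinate regardless of what happens in the other one.

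The one wrinkle is your reading of the input alphabet. Despite the ``$\Sigma_k \times \Sigma_k$'' in the statement (almost certainly a typo for $\Sigma_k$, given how $\delta$ is defined and how the construction is used later), the transition rule $\delta([p,q],a)=[\delta_1(p,a),\delta_2(q,a)]$ feeds the \emph{same} symbol $a$ to both factors; there is no independent second coordinate to pad. The paper accordingly just takes the distinguishing word $x\in\Sigma_k^*$ for $M_1$ and applies it directly to $M$, obtaining $\tau(\delta([p,q],x))=[\tau_1(\delta_1(p,x)),\tau_2(\delta_2(q,x))]$, which differs from $\tau(\delta([p',q'],x))$ in the first component. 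Your padding step is therefore unnecessary under the intended definition --- and if you set the ``second coordinate'' equal to $y$ rather than $0^n$, your proof becomes the paper's verbatim.
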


\begin{proof}
Let $[p,q]$ and $[p',q']$ be two distinct states of $M$.  Without loss of generality,
assume $p \not= p'$.  Then, since $M_1$ is minimal, we know that $p$ and $p'$
are distinguishable, so there exists $x$ such that $\tau_1(\delta_1(p,x)) \not=
\tau_1(\delta_1(p',x))$.   Then $\tau(\delta([p, q],x)) = [\tau_1(\delta_1(p,x)),
\tau_2(\delta_2(q,x))] \not= [\tau_1 (\delta_1(p',x), \tau_2(\delta_2(q',x))] =
\tau(\delta([p',q'],x)) $.  So $[p,q]$ and $[p', q']$ are distinguishable by $x$.
\end{proof}

\begin{corollary}
Let $M_1$ and $M_2$ be minimal $k$-DFAO's.  Form their cross product automaton, and
remove all states unreachable from the start state.  The result is minimal.
\label{cross}
\end{corollary}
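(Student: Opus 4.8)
The plan is to derive this directly from the Myhill--Nerode-style characterization of minimal DFAO's stated above, using the preceding Lemma as the key ingredient. Recall that that characterization says a $k$-DFAO is minimal iff (a) every state is reachable from the start state, and (b) every pair of distinct states is distinguishable. So it suffices to check these two conditions for the trimmed cross product automaton.

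First I would dispose of condition (a), which is immediate: we have explicitly deleted every state not reachable from $q_0 = [q_1, q_2]$, so the state set of the trimmed automaton consists exactly of the reachable states.

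Next I would verify condition (b). Write $M'$ for the trimmed cross product, with state set $Q' \subseteq Q_1 \times Q_2$ the set of reachable states. The Lemma already gives us that any two distinct states $[p,q]$ and $[p',q']$ of the full cross product $M$ are distinguishable in $M$: there is a string $x$ with $\tau(\delta([p,q],x)) \neq \tau(\delta([p',q'],x))$. The one point that needs attention is that this witness $x$ might, a priori, route through states that were removed. But this cannot happen: if $[p,q]$ is reachable from $q_0$, then so is $\delta([p,q],y)$ for every prefix $y$ of $x$, since appending the path labeled $y$ to a path from $q_0$ to $[p,q]$ again yields a path starting at $q_0$. Hence $Q'$ is closed under transitions, $\delta$ restricts to a genuine transition function on $Q'$, and the two computations $\delta([p,q],x)$ and $\delta([p',q'],x)$ — together with the outputs at their endpoints — are exactly the same in $M'$ as in $M$. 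Therefore $x$ still distinguishes $[p,q]$ from $[p',q']$ inside $M'$, and condition (b) holds.

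With (a) and (b) in hand, the characterization gives that $M'$ is minimal, which is the assertion. I do not expect any real obstacle here; the only subtlety — and the only step worth spelling out carefully in the final write-up — is the closure-under-transitions observation in the previous paragraph, which is precisely what guarantees that the distinguishing strings supplied by the Lemma remain valid after the unreachable states are thrown away.
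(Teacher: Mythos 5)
Your argument is correct and matches the paper's (implicit) reasoning: the corollary is stated without proof there precisely because it follows from the Lemma together with the Myhill--Nerode-style characterization, exactly as you argue. Your extra observation that the reachable states are closed under transitions, so distinguishing strings survive the trimming, is a worthwhile detail the paper leaves unstated.
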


Corollary~\ref{cross} gives a way to form the minimal cross product automaton,
but in practice we can do something even more efficient:  namely, using
a breadth-first approach, we can start from the start state $[q_1, q_2]$
and incrementally add only those states reachable from it.  

\section{First-order formulas for factors}

We start by developing a useful first-order logical formula with
free variables $i,m,n,p,s$.
We want it to assert that
\begin{align}
& \text{in the circular word given by the length-$n$ word starting } 
\nonumber \\
& \text{at position $s$ in the Thue-Morse word, there is a factor} 
\label{log} \\
&  \text{$w$ of length $m$ and (not necessarily least) period $p \geq 1$ starting at position $i$.}
\nonumber
\end{align}

In order to do this, we will conceptually
repeat the word $x = {\bf t}[s..s+n-1]$ twice, as
depicted below, where the black vertical line separates the two
copies.  The factor $w$ is indicated in grey; it may or may not straddle
the boundary between the two copies.
\begin{figure}[H]
\begin{center}
\includegraphics[width=5in]{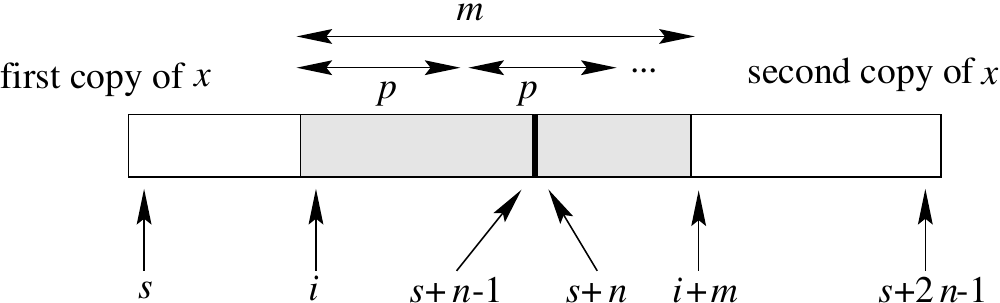}
\end{center}
\caption{Factor of a circular word of length $n$}
\end{figure}
Here indices should be interpreted as ``wrapping around''; the index
$s+n+j$ is the same as $s+j$ for $0 \leq j <n$.   Then the assertion
that $w$ has period $p$ potentially corresponds to three different ranges
of $j$:
\begin{itemize}
\item Both $j$ and $j+p$ lie in the first copy of $x$, so
we compare ${\bf t}[j]$ to ${\bf t}[j+p]$ for all $j$ in this range:
$i \leq j < \min(s+n-p,i+m-p)$.

\item $j$ lies in the first copy of $x$, but $j+p$ lies in the
second copy, so  we compare ${\bf t}[j]$ to ${\bf t}[j+p-n]$ 
for all $j$ in this range:
$\max(i,s+n-p) \leq j < \min(s+n,i+m-p)$. 

\item Both $j$ and $j+p$ lie in the second copy of $x$, so
we compare ${\bf t}[j-n]$ to ${\bf t}[j+p-n]$ for all $j$ in this range:
$\max(i,s+n) \leq j < i+m-p$.
\end{itemize}

Putting this all together, we get the following logical formula
that asserts the truth of statement \eqref{log}:
\begin{align*}
& \crep(i,m,n,p,s) :=  \\
& (\forall j  \, ((j \geq i) \, \wedge \, (j<s+n-p) \, \wedge\,  (j<i+m-p)) 
\implies {\bf t}[j] = {\bf t}[j+p] ) \, \wedge \\
& (\forall j \,  ((j \geq i) \, \wedge \, (j<s+n) \, \wedge \, (j \geq s+n-p) \, \wedge\, (j<i+m-p)) \implies {\bf t}[j] = {\bf t}[j+p-n]) \wedge \\
& (\forall j \, ((j \geq i) \, \wedge \, (j\geq s+n) \, \wedge \, (j<i+m-p)) \implies {\bf t}[j-n] = {\bf t}[j+p-n] ) 
\end{align*}

The translation into {\tt Walnut} is as follows:

{\tt
\begin{verbatim}
def crep "(Aj ((j>=i)&(j+p<s+n)&(j+p<i+m)) => T[j]=T[j+p]) &
     (Aj ((j>=i)&(j<s+n)&(j+p>=s+n)&(j+p<i+m)) => T[j]=T[(j+p)-n]) &
     (Aj ((j>=i)&(j>=s+n)&(j+p<i+m)) => T[j-n]=T[(j+p)-n])":
\end{verbatim}
}


The resulting automaton implementing $\crep(i,m,n,p,s)$ has 1423 states.  Note
that our formula does not impose conditions such as
$p \geq 1$ or $p \leq n$ or  $m \leq n$, which are
required for $\crep$ to make sense. These conditions (or
stronger ones that imply them) must be included in any predicate
that makes use of $\crep$.  Neither does the predicate assert that the
given factor's {\it smallest\/} period is $p$; just that $p$ is one of the possible
periods.

\section{Prefixes}

In this section we prove the following theorem:

\begin{theorem}
Every nonempty prefix of the Thue-Morse word has circular critical exponent
in $S := \{ 1, 2, {7 \over 3}, {5 \over 2}, {{13} \over 5}, {8 \over 3}, 3 \}$.
\label{two}
\end{theorem}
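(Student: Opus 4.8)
The plan is to reduce the statement to a finite computation carried out by \texttt{Walnut}, using the $\crep$ predicate developed in the previous section. Recall that a prefix of $\bf t$ of length $n$ is exactly ${\bf t}[0..n-1]$, i.e.\ the factor starting at position $s = 0$. So the first step is to specialize the setup to $s = 0$ and build a predicate that captures the circular critical exponent of this prefix. Concretely, I would define a predicate $\mathtt{crepfac}(n,m,p)$ asserting that the length-$n$ prefix of $\bf t$, read circularly, has a factor of length $m$ and period $p$, together with the sanity conditions $n \ge 1$, $1 \le p \le n$, $p \le m \le 2n$ (the factor in the doubled word has length at most $2n$), and $\exists i\ \crep(i,m,n,p,0) \wedge (0 \le i) \wedge (i < n)$. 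The $\exists i$ quantifier is harmless since \texttt{Walnut} handles it directly, and the resulting automaton has free variables $n, m, p$.

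Next I would express membership of the circular critical exponent in the set $S$. Every element of $S$ is a rational of the form $a/b$ with $b \in \{1,2,3,5\}$, so the condition ``$m/p \le a/b$'' is the linear condition $b m \le a p$, which \texttt{Walnut} can encode. The circular critical exponent of the length-$n$ prefix is the supremum of $m/p$ over all valid factors; since $m \le 2n$ and $p \ge 1$ this supremum is attained and is a rational with bounded denominator, so it is a genuine maximum. The key claim to verify is: for every $n \ge 1$ there exist $m, p$ with $\mathtt{crepfac}(n,m,p)$ true and $m/p$ equal to the claimed maximum, and that maximum lies in $S$; equivalently, (i) for all $n, m, p$ with $\mathtt{crepfac}(n,m,p)$ we have $\exp = m/p \le 3$ refined to always land in $S$, and (ii) the bound is tight. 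The cleanest formulation is: the \texttt{Walnut} predicate
$$
\forall n\ \bigl( (n \ge 1) \implies \exists m, p\ \bigl( \mathtt{crepfac}(n,m,p) \wedge \neg \exists m', p'\ (\mathtt{crepfac}(n,m',p') \wedge m' p > m p') \bigr) \bigr)
$$
picks out, for each $n$, a factor realizing the maximal exponent; then one checks that the associated set of pairs $(m,p)$ always yields $m/p \in S$ by verifying, for each $a/b \in S$, an appropriate inclusion/exclusion of linear inequalities, and separately that no exponent strictly between consecutive elements of $S$ (nor above $3$) ever occurs.

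In practice the verification splits into finitely many \texttt{Walnut} queries, one per candidate value and per ``gap'': for each pair of consecutive elements $a_1/b_1 < a_2/b_2$ of $S$, assert that there is no $n, m, p$ with $\mathtt{crepfac}(n,m,p)$, $a_1 p \cdot \text{(something)} < \dots$ --- more simply, assert $\neg \exists n, m, p\ \bigl(\mathtt{crepfac}(n,m,p) \wedge b_1 m > a_1 p \wedge b_2 m < a_2 p\bigr)$ to show no exponent lies strictly in the open interval, and assert $\neg \exists n,m,p\ (\mathtt{crepfac}(n,m,p) \wedge m > 3p)$ to cap it at $3$. Combined with the existence direction showing each value in $S$ is actually achieved by some prefix, this establishes the theorem. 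I would also double-check the small cases $n \le 3$ by hand, since there the ``$|x| \ge 4$'' threshold from Proposition~\ref{prop1} does not apply and the circular structure can behave atypically.

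The main obstacle I anticipate is not logical but computational and bookkeeping-related: getting the index arithmetic in $\crep$ exactly right when $s = 0$ (the wraparound cases collapse or degenerate), and making sure the ranges of $m$ and $p$ in $\mathtt{crepfac}$ are neither too restrictive (missing a genuine maximal repetition that straddles the boundary) nor too loose (admitting a ``factor'' longer than the doubled word, which would inflate the exponent spuriously). A secondary concern is that the cross-product automata built from $\crep$ (1423 states) together with the extra linear constraints may be large enough to make minimization slow, though the excerpt promises the Thue-Morse computations finish in seconds, so I expect \texttt{Walnut} to handle it. Once the predicate is correct, the proof is essentially a list of \texttt{Walnut} commands returning \texttt{TRUE}, together with the explicit automaton for the function $n \mapsto \ccexp(\text{prefix of length } n)$, from which the finiteness of the range and its equality with $S$ can be read off directly.
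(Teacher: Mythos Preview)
Your overall strategy---specialize $\crep$ to $s=0$ and reduce to finitely many \texttt{Walnut} queries, one per element of $S$---is the paper's strategy. But two concrete choices would make the computation give the wrong answer.

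First, the range $m \le 2n$ is too permissive. By definition $\ccexp(w)$ is a supremum over factors of \emph{conjugates} of $w$, and those have length $|w|=n$; the doubling of $x$ in the $\crep$ construction is only a device for locating such factors, not a license to take factors of $xx$ longer than $n$. With $m\le 2n$ you would, for $n=2$ and $x=\texttt{01}$, pick up $\texttt{0101}$ and report $\ccexp(\texttt{01})=2$ instead of $1$. The paper's predicates enforce $m\le n$.

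Second, and more seriously, your ``gap'' query does not test what you need. The assertion
\[
\neg\,\exists n,m,p\ \bigl(\mathtt{crepfac}(n,m,p)\ \wedge\ b_1 m>a_1 p\ \wedge\ b_2 m<a_2 p\bigr)
\]
says that no circular factor of any prefix has \emph{any} period $p$ with $m/p$ in $(a_1/b_1,a_2/b_2)$. But $\ccexp$ is the \emph{supremum} of such ratios, and that supremum can lie in $S$ while individual ratios fall in the gaps. For the length-$10$ prefix $\texttt{0110100110}$, the conjugate $\texttt{1001100110}$ has the length-$9$ prefix $\texttt{100110011}$ of least period $4$, so $m/p=9/4\in(2,7/3)$; yet $\ccexp$ of this prefix is $5/2\in S$. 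Your query for the gap $(2,7/3)$ would thus return \texttt{FALSE} and the argument would halt. The paper sidesteps this by testing the supremum directly: for each $a/b\in S$ it builds $\prefeqab(n):=\prefgeab(n)\wedge\neg\prefgtab(n)$, where $\prefgeab$ asserts ``some circular factor has $bm\ge ap$'' and $\prefgtab$ asserts ``some has $bm>ap$'', and then checks that the disjunction of the seven $\prefeqab$ holds for all $n\ge1$. (As an aside, your ``cleanest formulation'' with the comparison $m'p>mp'$ is a product of two unknowns and is not expressible in \texttt{Walnut}, which works over Presburger arithmetic; but you already abandon it for the per-value queries, so this is not the real obstruction.)
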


Furthermore, we will precisely characterize the $n$ for which the circular critical
exponent is each member of $S$.

We start by creating a first-order formula asserting
that the length-$n$ prefix, considered as
a circular word, has some factor of length $m$ and (not necessarily least) period $p$,
satisfying $m/p = a/b$:
$$\prefgeab(n) := \exists i,m,p \ (p\geq 1) \, \wedge \, (m \leq n) \,
	\wedge \, (i<n) \, \wedge \, (bm \geq ap) \, \wedge \, \crep(i,m,n,p,0).
$$
Note that the condition $p \leq n$ need not be included explicitly, as it is
implied by the conjunction of $m \leq n$ and $bm \geq ap$.

Next, we create a formula asserting that the length-$n$ prefix, considered as
a circular word, has a factor with exponent $> a/b$:
$$ \prefgtab(n) := \exists i,m,p \ (p\geq 1) \, \wedge \, (m \leq n) \,
        \wedge \, (i<n) \, \wedge \, (bm > ap) \, \wedge \, \crep(i,m,n,p,0).
$$

Finally, we create a formula asserting that the length-$n$ prefix 
has some factor of exponent exactly $a/b$:
$$ \prefeqab(n) := \prefgeab(n) \, \wedge \, \neg \prefgtab(n) .$$

%
%
%
%
%
%

No single {\tt Walnut} command can be the direct translation of the formulas above,
as there is no way to take arbitrary integer
parameters $a,b$ as input and perform multiplication by them.  Nevertheless,
since there are only finitely many possibilities, we can translate the above
logical statements to finitely many
{\it individual\/} {\tt Walnut} commands for {\it each\/} exponent $a/b$.
For example, for $7/3$ we can write

\medskip

\noindent{\tt def prefge73 "E i,m,p (p>=1) \& (m<=n) \& (i<n) \& (3*m=7*p) \& \$crep(i,m,n,p,0)":}\\
\noindent{\tt def prefgt73 "E i,m,p (p>=1) \& (m<=n) \& (i<n) \& (3*m>7*p) \& \$crep(i,m,n,p,0)":} \\
\noindent{\tt def prefeq73 "\$prefge(n) \& \~\$prefgt73(n)":}

\medskip
\noindent and similarly for the other exponents.

\begin{proof}
We can now prove Theorem~\ref{two} by executing the {\tt Walnut} command

\medskip

\noindent {\tt eval testpref "An (n>=1) => (\$prefeq11(n) | \$prefeq21(n) | \$prefeq73(n) | } \\
{\tt \$prefeq52(n) | \$prefeq135(n) | \$prefeq83(n) | \$prefeq31(n))":}

\medskip

\noindent (where $|$ represents OR), and verifying that {\tt Walnut} returns {\tt true}.
\end{proof}

Table~\ref{tone} gives information about the state sizes of the automata for
the exponents in $S := \{ 1, 2, {7 \over 3}, {5 \over 2}, {{13} \over 5}, {8 \over 3}, 3 \}$.

In fact, even more is true.  We can create a {\it single} $2$-DFAO that on input
$n$ outputs the circular critical exponent of the prefix of length $n$ of
$\bf t$.  We do this by computing the automaton for each of the
possible exponents, forming the cross product automaton,
and producing the appropriate output.

\begin{table}[H]
\begin{center}
\begin{tabular}{ccccl}
      & number of  & number of & number of &  \\
$a/b$ & states for & states for & states for & first few $n$ accepted by prefeqab  \\
      & prefgeab      & prefgtab     & prefeqab & \\
\hline
1/1 & 2 & 4 & 3 & 1, 2\\
2/1 & 4 & 4 & 4 & $ 3, 4, 6, 8, 12, 16, 24, 32, 48, 64, 96, \ldots $ \\
7/3 & 4 & 12 & 7 & $13, 26, 37, 52, 61, 74, 93, \ldots$ \\
5/2 & 12 & 10 & 8 & $5, 10, 20, 29, 40, 45, 58, 77, 80, 90, \ldots$ \\
13/5 & 10 & 12 & 9 & $17, 34, 53, 65, 68, 85, \ldots$ \\
8/3 & 12 & 8 & 14 & $9, 18, 21, 27, 33, 36, 42, 43, 49, 54, \ldots $ \\
3/1 &  8 & 1 & 8 & $7, 11, 14, 15, 19, 22, 23, 25, 28, 30, \ldots $ \\
\end{tabular}
\end{center}
\label{tone}
\caption{State sizes for repetition of prefixes}
\end{table}

\begin{theorem}
There is a $2$-DFAO of 29 states that, on input $(n)_2$, returns
the circular critical exponent of the prefix of length $n$ of
$\bf t$.
\label{dfao29}
\end{theorem}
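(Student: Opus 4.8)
The plan is to construct the $2$-DFAO directly as a cross product of the seven automata for the predicates $\prefeqab$, one for each $a/b \in S$, and then minimize. First I would recall that by Theorem~\ref{two} every $n \geq 1$ has circular critical exponent exactly one of the seven values in $S$, so the predicates $\prefeq11, \prefeq21, \prefeq73, \prefeq52, \prefeq135, \prefeq83, \prefeq31$ partition the set $\{n : n \geq 1\}$ (their disjunction is always true, as verified in the proof of Theorem~\ref{two}, and they are pairwise mutually exclusive since a word has a unique critical exponent). Each $\prefeqab$ is realized by a minimal DFA whose state counts appear in Table~\ref{tone} (the ``number of states for prefeqab'' column: $3, 4, 7, 8, 9, 14, 8$ respectively).

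Next I would form the cross product automaton $M$ of these seven DFAs over the common input alphabet $\Sigma_2$, restricted to the states reachable from the start state, exactly as in Corollary~\ref{cross}. By that corollary the reachable part of the cross product is itself minimal as a DFA recognizing the tuple of membership bits. I then attach an output function $\tau$ to $M$: on a state $[q_1, \ldots, q_7]$, since the seven underlying languages partition the positive integers, reading any $(n)_2$ with $n \geq 1$ lands in exactly one component's accepting configuration, so I define $\tau$ of that state to be the corresponding element of $S$ (and on the unreachable-by-positive-$n$ residual, e.g. the all-zero input representing $n=0$, any fixed default value, since the statement only concerns $n \geq 1$). This produces a $2$-DFAO computing $n \mapsto \ccexp({\bf t}[0..n-1])$.

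Finally I would minimize this DFAO using the Myhill--Nerode procedure for DFAOs described in the Minimality subsection: merge states that are indistinguishable with respect to $\tau \circ \delta$. The claim is that the minimal equivalent $2$-DFAO has $29$ states; this is the one concrete number to be checked, and I would verify it by running the minimization (or equivalently by a direct {\tt Walnut} command that builds the combined automaton and reports its size). In the write-up I would exhibit the resulting $29$-state automaton, or at least state that the computation yields it, since the construction up to this point is deterministic and the output is uniquely determined by Theorem~\ref{two} together with minimality.

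The main obstacle is not conceptual — the cross product and minimization are mechanical — but rather ensuring the bookkeeping is right: one must be careful that the ``correct answer even under leading zeroes'' convention is respected, that the output assignment is well-defined precisely because Theorem~\ref{two} guarantees the partition property (so no state can sit in two accepting components simultaneously for a positive $n$), and that the claimed count of $29$ is exactly the size after minimization rather than before. So the hard part is really the verification of the number $29$, which rests on the (routine but essential) minimization computation; everything else follows from Corollary~\ref{cross} and Theorem~\ref{two}.
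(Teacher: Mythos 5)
Your proposal is essentially the paper's own proof: the authors likewise form the (reachable part of the) cross product of the seven minimal {\tt prefeq} automata, attach outputs via the partition property from Theorem~\ref{two}, and report the resulting $29$-state DFAO, with the number $29$ resting on the same mechanical computation you describe. The only cosmetic difference is that you append an explicit Myhill--Nerode minimization at the end, whereas the paper relies on Corollary~\ref{cross} to conclude that the reachable cross product is already minimal.
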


\begin{proof}
We cannot compute this automaton directly in {\tt Walnut} in its current
version, but it can be computed easily from the individual automata {\tt Walnut}
computes for each exponent in
$S = \{ 1, 2, {7 \over 3}, {5 \over 2}, {{13} \over 5}, {8 \over 3}, 3 \}$.

Now we can finish the (computational) proof of Theorem~\ref{dfao29}.
We start with the automaton {\tt prefeq11} discussed above.
Next,
for each of the remaining exponents $a/b$, we iteratively
form the cross product of the current automaton with 
the automaton {\tt prefeqab}, and remove unreachable states.
After all exponents are handled,
this gives the $29$-state automaton depicted
in Figure~\ref{figp}.

\begin{figure}[H]
\begin{center}
\includegraphics[width=6.5in]{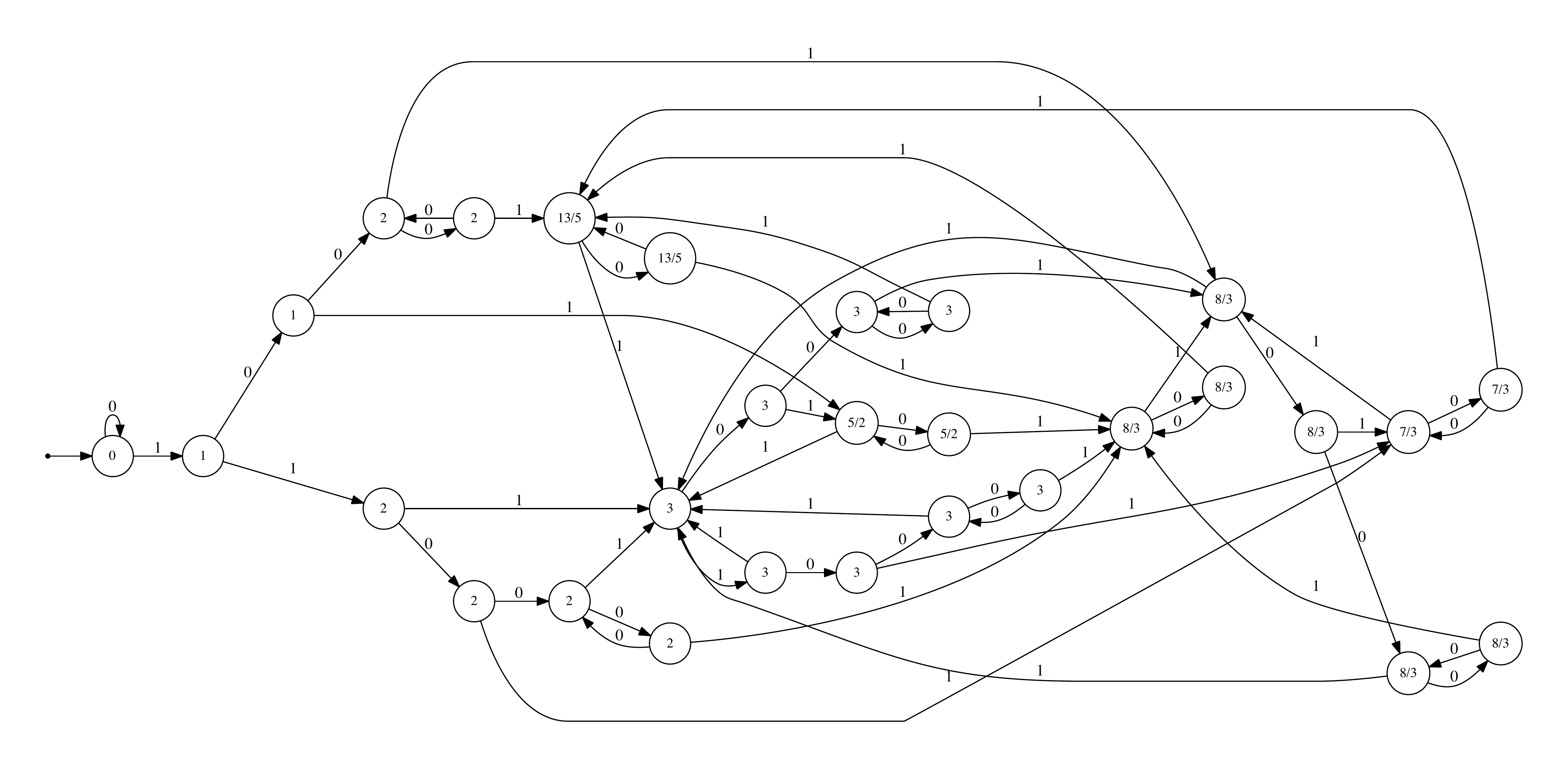}
\caption{Automaton for prefixes of Thue-Morse}
\label{figp}
\end{center}
\end{figure}
\end{proof}

\begin{remark}
We tested the automaton in Figure~\ref{figp} by explicitly calculating
$\ccexp$ for the first 500 prefixes of $\bf t$ and comparing the results.
They agreed in every case.
\end{remark}
 
\section{Factors}

Instead of just prefixes, we can carry out the calculations of the 
previous section for all factors.  The goal is to prove the following
result.
\begin{theorem}
Every factor of the Thue-Morse word has circular critical exponent lying in
the finite set $U := \{ 1, 2, {7 \over 3}, {{17} \over 7}, {5 \over 2},
{{13} \over 5}, {8 \over 3}, 3, {{10} \over 3}, {7 \over 2}, 
{{11} \over 3}, 4 \}$.
\end{theorem}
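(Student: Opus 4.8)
The plan is to re-run the argument behind Theorem~\ref{two}, but now letting the starting position $s$ of the factor be a free variable instead of fixing $s=0$. Conceptually nothing new is needed, since $\crep(i,m,n,p,s)$ already carries $s$ as a parameter; I only have to rewrite the ``wrapper'' predicates so that $s$ is threaded through and the positional side conditions are stated relative to $s$. It is also worth recording the cheap a priori bound that makes the problem finite: if $w = uv$ then any factor $x$ of the conjugate $vu$ splits as $x = x_1 x_2$ across the seam, each $x_i$ is a factor of $w$ sharing the period $p$ of $x$, so $|x_i| \le \cexp(w)\cdot p$ and hence $\exp(x) = |x|/p \le 2\,\cexp(w)$. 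Since every factor of $\mathbf t$ has $\cexp \le 2$, every circular factor has exponent in $[1,4]$, so only finitely many rationals need be checked.

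First I would introduce, for each candidate rational $a/b \in U$, the predicates
\begin{align*}
\facgeab(s,n) &:= \exists i,m,p\ (p\ge 1)\wedge(m\le n)\wedge(s\le i)\wedge(i<s+n)\wedge(bm\ge ap)\wedge\crep(i,m,n,p,s), \\
\facgtab(s,n) &:= \exists i,m,p\ (p\ge 1)\wedge(m\le n)\wedge(s\le i)\wedge(i<s+n)\wedge(bm> ap)\wedge\crep(i,m,n,p,s), \\
\faceqab(s,n) &:= \facgeab(s,n)\wedge\neg\,\facgtab(s,n),
\end{align*}
exactly parallel to $\prefgeab,\prefgtab,\prefeqab$. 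Just as in the prefix case, {\tt Walnut} cannot take $a,b$ as numeric parameters, so each of these becomes a separate concrete {\tt Walnut} command---one per exponent in $U$---with the multiplications by $a$ and $b$ written out. For a fixed pair $(s,n)$ with $n\ge 1$, the circular word $\mathbf t[s..s+n-1]$ has only finitely many factors, so its circular critical exponent is attained; therefore $\ccexp(\mathbf t[s..s+n-1]) = a/b$ holds precisely when $\faceqab(s,n)$ holds.

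The theorem is then the single {\tt Walnut} verification
$$ \forall s\,\forall n\ \Bigl((n\ge 1)\implies \bigvee_{a/b\in U}\faceqab(s,n)\Bigr), $$
which I would run and check returns {\tt true}. If it does, every circular factor of $\mathbf t$ has exponent equal to one of the twelve values in $U$, which is exactly the claim; and if {\tt Walnut} instead produced a counterexample $(s,n)$, that pair would either exhibit an exponent missing from $U$ or reveal a coding error, so the same computation simultaneously certifies that $U$ has been listed correctly and completely.

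The hard part, I expect, is computational rather than logical. In the prefix setting $s$ was pinned to $0$, which kept every automaton small; here $\crep$ is invoked with both $n$ and $s$ free, and eliminating the existential quantifiers over $i,m,p$ (a subset construction) can produce intermediate automata far larger than the $1423$-state machine for $\crep$, so a state blow-up for one of the twelve exponents---or when assembling the final disjunction---is the main risk. A smaller but genuine pitfall is getting the wrapper side conditions exactly right: the start index must satisfy $s\le i<s+n$ (so the inner factor begins inside the first copy of the doubled word and can only spill into the second copy), and the length bound $m\le n$ is what keeps $\ccexp$ finite---e.g.\ it forces $\ccexp = 1$ when $n=1$ rather than $\infty$. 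Finally, $U$ itself must be found before the verification can be written down; I would do this by computational exploration, adjoining to the prefix set $S$ the extra exponents $\tfrac{17}{7},\tfrac{10}{3},\tfrac72,\tfrac{11}{3},4$ that show up among general factors, and then let {\tt Walnut} prove that nothing else occurs.
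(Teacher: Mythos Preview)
Your proposal is correct and follows essentially the same approach as the paper: you define $\facgeab$, $\facgtab$, and $\faceqab$ with $s$ free, then verify with {\tt Walnut} that the disjunction over $U$ holds for all $(s,n)$ with $n\ge 1$, which is exactly what the paper does. Your a~priori bound $\ccexp(w)\le 2\cexp(w)\le 4$ is a nice additional remark (it explains why a finite $U$ is to be expected) that the paper does not make explicit.
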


\begin{proof}
We can mimic the previous analysis.  A length-$n$ factor
${\bf t}[s..s+n-1]$ can be specified by the pair $(n,s)$.

We first make the assertion that the factor specified by $(n,s)$, considered
as a circular word, has a factor of length $m$ 
that has a period $p$ with $m/p = a/b$:
$$\facgeab(n) = \exists i,m,p \, (p\geq 1) \, \wedge \, (m \leq n) \,
\wedge \, (i \geq s) \,\wedge\, (i<s+n) \, \wedge \, (bm \geq ap) \, \wedge \, \crep(i,m,n,p,s).
$$


Next, we make the assertion that ${\bf t}[s..s+n-1]$, considered as
a circular word, has a factor with exponent $> a/b$:
$$ \facgtab(n) = \exists i,m,p \, (p\geq 1) \, \wedge \, (m \leq n) \,
        \wedge \, (i \geq s) \,\wedge (i<s+n) \, \wedge \, (bm > ap) \, \wedge \, \crep(i,m,n,p,s).
$$


Finally, we make the assertion ${\bf t}[s..s+n-1]$, considered as a 
circular word, has a factor of
exponent exactly $a/b$ and no larger:
$$ \faceqab(n) = \facgeab(n) \, \wedge \, \neg \facgtab(n) .$$

\begin{table}[H]
\begin{center}
\begin{tabular}{ccccc}
      & number of  & number of & number of &  first occurrence \\
$a/b$ & states for & states for & states for & $(n,s)$ of factor \\
      & facgeab      & facgtab     & faceqab  & with $\ccexp = a/b$ \\
\hline
1/1 & 2 & 6 & 5 & (1,0) \\
2/1 & 6 & 9 & 10 & (2,1) \\
7/3 & 9 & 50 & 43 & (7,3)\\ 
17/7 & 50 & 51 & 21 & (23,19) \\
5/2 & 51 & 71 & 41 & (5,0)  \\
13/5 & 71 & 63 & 33 & (13,8) \\
8/3 & 63 & 36 & 59  & (9,0) \\
3/1 & 36 & 24 & 35 & (4,1)  \\
10/3 & 24 & 26 & 15  & (10,3) \\
7/2 & 26 & 22 & 14  & (7,10)\\
11/3 & 22 & 21 & 16  & (18,3) \\
4/1 & 21 & 1 & 21 & (6,5)
\end{tabular}
\end{center}
\caption{State sizes for automata for circular exponents of factors}
\end{table}

Now we just make the assertion that one of the 12 possibilities always occurs:

\medskip

\noindent {\tt 
eval testfac "An (n>=1) => (As (\$faceq11(n,s) | \$faceq21(n,s) | \\
\$faceq73(n,s) | \$faceq177(n,s) | \$faceq52(n,s) | \$faceq135(n,s) | \\
\$faceq83(n,s) |  \$faceq31(n,s) | \$faceq103(n,s) | \$faceq72(n,s) | \\
\$faceq113(n,s) | \$faceq41(n,s)))":}


\medskip

\noindent and {\tt Walnut} evaluates it to be true.  Furthermore, it is easy to check
that each possibility occurs at least once, as given in the table.

\end{proof}

\begin{theorem}
There is a 204-state $2$-DFAO that, on input $(n,s)$ in base $2$, outputs
$$\ccexp({\bf t}[s..s+n-1]).$$
\end{theorem}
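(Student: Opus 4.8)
The plan is to mimic the proof of Theorem~\ref{dfao29}, but now working with pairs $(n,s)$ rather than single integers $n$. First, for each of the twelve exponents $a/b \in U$, I would have {\tt Walnut} compute the minimal DFA accepting the base-$2$ representations of the pairs $(n,s)$ for which $\faceqab(n,s)$ holds; these are precisely the automata whose state counts appear in the ``faceqab'' column of the preceding table. Each such acceptor is then reinterpreted as a $2$-DFAO: on the inputs it accepts it outputs the value $a/b$, and on every other input it outputs a single ``don't care'' sentinel.

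Next, I would build the desired DFAO incrementally. Starting from the DFAO for $\faceqab$ with $a/b = 1$, I would repeatedly form the cross product with the DFAO for the next exponent in $U$ and delete states unreachable from the start state; by Corollary~\ref{cross} (and the breadth-first refinement noted just after it) each such step yields a minimal automaton, which keeps the intermediate machines small enough to handle. To assign outputs to the final cross-product automaton, observe that the previous theorem shows that for every $n \geq 1$ and every $s$ exactly one of the twelve predicates $\faceqab(n,s)$ holds: ``at least one'' is the content of the {\tt testfac} check, and ``at most one'' is immediate because $\ccexp$ is single-valued. Hence in any reachable configuration exactly one component is in an accepting state, and we output the corresponding $a/b$. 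After merging output-equivalent states, the construction collapses to $204$ states, which is then the required DFAO.

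The main obstacle here is not mathematical but computational bookkeeping: one must carry out the twelvefold cross product while maintaining minimality of the intermediate automata (so the process stays feasible), and then verify that the final minimized machine indeed has exactly $204$ states and computes the claimed function. As a sanity check, paralleling the remark following Theorem~\ref{dfao29}, I would also compute $\ccexp({\bf t}[s..s+n-1])$ directly by brute force for all small pairs $(n,s)$ (say, $n$ up to a few hundred and all valid $s$) and confirm agreement with the automaton's output. One minor technical point, already guaranteed by {\tt Walnut}'s conventions, is that all the automata behave correctly when their inputs are padded by leading zeroes, so the pair-encoding of $(n,s)$ is handled unambiguously throughout.
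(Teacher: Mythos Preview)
Your proposal is correct and follows essentially the same approach as the paper: the paper's proof simply says ``As before, we use the product construction to combine the automata for $\faceqab(n,s)$ for all twelve possibilities for $a/b$,'' and your write-up is a faithful (and more detailed) elaboration of exactly that procedure, mirroring the proof of Theorem~\ref{dfao29}.
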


\begin{proof}
As before, we use the product construction to combine
the automata for $\faceqab(n,s)$ for all twelve possibilities
for $a/b$.  The automaton is too large to display here, but it
is available at \codelink \, .
\end{proof}

\begin{remark}
We tested the correctness of our automaton by comparing its result to
the result of thousands of randomly-chosen factors of
varying lengths of $\bf t$.  It passed all tests.
\end{remark}

\subsection{Smallest circular critical exponents for each length}

For every length $n$, we can consider the least circular 
critical exponent over all
factors ${\bf t}[s..s+n-1]$ of length $n$.  Define
$$ \lcce(n) = \min_{{x \text{ a factor of } {\bf t}} \atop {|x|=n}} \ccexp(x) .$$

\begin{theorem}
For all $n \geq 1$ we have
$\lcce(n) \in T$ where
$T := \{ 1, 2, {7 \over 3}, {{17} \over 7}, {5 \over 2} \}$.
\end{theorem}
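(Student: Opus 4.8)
The plan is to mimic the approach used for the preceding theorems, but now quantifying over the length variable and using the automata already built for each exponent in $U$. Recall that for each $a/b \in U$ we have a {\tt Walnut}-computable predicate $\faceqab(n,s)$ asserting that the circular word ${\bf t}[s..s+n-1]$ has circular critical exponent exactly $a/b$. The quantity $\lcce(n)$ is the minimum of $\ccexp({\bf t}[s..s+n-1])$ over all valid $s$, so to show $\lcce(n) \le a/b$ for some particular $a/b$ it suffices to assert that \emph{some} length-$n$ factor has circular critical exponent $\le a/b$, i.e.\ that there exists $s$ and an exponent $c/d \in U$ with $c/d \le a/b$ for which $\faceqcd(n,s)$ holds.

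First I would define, for each of the five candidate exponents in $T = \{1, 2, {7\over 3}, {17\over 7}, {5\over 2}\}$, a predicate $\facleab(n)$ asserting the existence of $s$ such that the factor $(n,s)$ achieves circular critical exponent at most $a/b$. For the smallest, $\facle1(n) := \exists s \ \faceq11(n,s)$; for $a/b = 2$, $\facle21(n) := \exists s \, (\faceq11(n,s) \vee \faceq21(n,s))$; and so on up the chain $1 < 2 < {7\over 3} < {17\over 7} < {5\over 2}$, each predicate OR-ing in all the $\faceqcd$ with $c/d$ at that level or below. Then the theorem reduces to the single assertion
$$\forall n \ (n \ge 1) \implies \big(\facle1(n) \vee \facle21(n) \vee \facle73(n) \vee \facle177(n) \vee \facle52(n)\big),$$
which we hand to {\tt Walnut} and verify it returns {\tt true}. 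Equivalently — and perhaps cleaner — one can directly assert $\forall n\,(n\ge1)\implies \exists s\,(\faceq11(n,s)\vee\faceq21(n,s)\vee\faceq73(n,s)\vee\faceq177(n,s)\vee\faceq52(n,s))$, since $\lcce(n)\in T$ is exactly the statement that for every $n$ some length-$n$ factor attains one of these five values (the containment $\lcce(n)\in U$ is already known, and $T\subseteq U$, so we only need that the minimum never exceeds $5\over 2$, combined with the observation that each value in $T$ actually occurs — which the first-occurrence table already gives, e.g.\ $(1,0),(2,1),(7,3),(23,19),(5,0)$).

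The bulk of the argument is computational and routine given the machinery already in place; the only genuine obstacle is the same one flagged earlier in the paper — {\tt Walnut} cannot take the rational parameters $a,b$ as input, so each exponent requires its own hand-translated command, and one must be careful that the multiplicative constants ($3m = 7p$, $7m = 17p$, $2m = 5p$, etc.) are transcribed correctly and that the side conditions $p \ge 1$, $m \le n$, $s \le i < s+n$ are all present so that $\crep$ is invoked only where meaningful. A secondary, purely bookkeeping point: one should double-check that the five values of $T$ are precisely the values $\lcce$ takes and not a proper superset — i.e.\ confirm via the existence of the witnessing factors in the table that none of $1,2,{7\over3},{17\over7},{5\over2}$ is redundant — so that the theorem statement $\lcce(n)\in T$ is tight rather than merely an upper bound. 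Finally, as in the remarks accompanying the earlier theorems, I would sanity-check the conclusion by brute-force computing $\lcce(n)$ for the first several hundred $n$ and confirming agreement.
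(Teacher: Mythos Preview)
Your approach is correct and close in spirit to the paper's, but the logical decomposition differs. The paper defines, for each $a/b \in T$, a predicate
\[
\facsmallab(n) \;:=\; (\exists s\ \faceqab(n,s)) \ \wedge\ (\forall s\ \facgeab(n,s)),
\]
which asserts directly that $\lcce(n) = a/b$ (some length-$n$ factor attains exponent $a/b$, and every length-$n$ factor has circular critical exponent $\ge a/b$); it then verifies in {\tt Walnut} that the disjunction of the five $\facsmallab$ over $a/b \in T$ covers all $n \ge 1$. Your argument instead establishes only $\lcce(n) \le 5/2$ via the purely existential statement $\exists s\,\bigvee_{a/b \in T} \faceqab(n,s)$, and then appeals to the earlier theorem that every factor's circular critical exponent lies in the finite set $U$, together with $T = \{x \in U : x \le 5/2\}$, to conclude $\lcce(n) \in T$.

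Both routes are valid. Yours is leaner for the bare containment statement, since it avoids the universal quantifier over $s$ and reuses the classification of $U$ already proved. The paper's route is self-contained (it does not need the $U$-theorem as a lemma) and, more to the point, produces the five individual automata $\facsmallab$, which are exactly the ingredients for the cross-product construction of the 25-state DFAO in the theorem that immediately follows. If you proceeded to build that DFAO, you would end up needing the paper's predicates anyway.

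One small correction to your side remark on tightness: the first-occurrence pairs $(7,3)$, $(23,19)$, etc., from the factor table witness only that \emph{some} length-$n$ factor has the given exponent, not that $\lcce(n)$ equals it. To certify, say, $\lcce(23) = 17/7$ you also need that no length-$23$ factor has exponent in $\{1,2,7/3\}$; this is precisely what the paper's $\facsmallab$ predicates (and the accompanying table of accepted $n$) supply, and your formulation does not directly give it.
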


\begin{proof}
First, we create a first-order logic statement asserting that
there exists some length-$n$ factor whose circular exponent equals $a/b$:
$$ \facab(n) := \exists s \ \faceqab(n,s) .$$

Next, we create a statement asserting that $a/b$ is the least circular
critical exponent for words of length $n$; in other words,
that there exists some
length-$n$ factor whose circular critical exponent equals $a/b$, and
furthermore every length-$n$ factor has circular critical exponent $\geq a/b$:
$$ \facsmallab(n) := \facab(n) \ \wedge \ (\forall s\ 
	\facgeab(n,s)) .$$

Finally, we just assert that for every $n \geq 1$, at least one of the
five alternatives holds:

\medskip

\noindent{\tt eval smallfactest "An (n >=1) => (\$facsmall11(n) | \$facsmall21(n) | \\
\$facsmall73(n) | \$facsmall177(n) | \$facsmall52(n))":}

\medskip

\noindent {\tt Walnut} evaluates this to be true.
\end{proof}


The sizes of the automata occurring in the proof are summarized below.

\begin{table}[H]
\begin{center}
\begin{tabular}{cccl}
      & number of  & number of &  \\
$a/b$ & states for & states for & first few $n$ accepted by $\facsmallab$ \\
      & facab & facsmallab  & \\
\hline
1/1 &  3 &  3 & $1, 2$ \\
2/1 &  3 & 4 & $3,4,6,8,12,16,24,32,48$ \\
7/3 &  20 & 20 & $7,13,14,19,21,25,26,27,28,29,33,35,37,38,42,43,45,49,50$ \\
17/7 & 7 & 7 & $23,31,39,46,47$ \\
5/2 &  9 & 16 & $5,9,10,11,15,17,18,20,22,30,34,36,40,41,44$
\end{tabular}
\end{center}
\caption{State sizes for $\facab$ and $\facsmallab$}
\end{table}

\begin{theorem}
There is a 25-state $2$-DFAO, that on input $(n)_2$ computes the least
circular critical exponent over all factor of $\bf t$ of length $n$.
It is given in Figure~\ref{smallfac}.
\end{theorem}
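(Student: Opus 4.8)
The plan is to follow exactly the same recipe used in the proof of Theorem~\ref{dfao29}, but applied to the five automata $\facsmallab$ for $a/b \in T = \{1, 2, {7\over 3}, {17\over 7}, {5\over 2}\}$. Since the previous theorem establishes that for every $n \geq 1$ exactly one of the five predicates $\facsmall11(n), \facsmall21(n), \facsmall73(n), \facsmall177(n), \facsmall52(n)$ holds (the disjunction is always true, and the $\facsmallab$ predicates are pairwise incompatible because a length has only one minimum), these five DFAs in fact partition the positive integers, and their combination yields a single DFAO computing $\lcce(n)$.

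Concretely, I would start from the automaton $\facsmall11$, which outputs $1$ on accepted inputs and (say) a sentinel value otherwise. Then, for each of the remaining four exponents $a/b$ in the order $2, {7\over 3}, {17\over 7}, {5\over 2}$, I would form the cross product of the current automaton with the DFA for $\facsmallab$, assign the output $a/b$ to every product state whose $\facsmallab$-component is accepting (overriding the sentinel), and remove states unreachable from the start state. By Corollary~\ref{cross}, each such product-and-prune step preserves minimality, so after all five exponents are processed I obtain a minimal $2$-DFAO. Running this computation (which {\tt Walnut} can perform on the individual component automata listed in the preceding table) produces the automaton of $25$ states shown in Figure~\ref{smallfac}, completing the proof.

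The main thing to verify — and the only place where something could go wrong — is that the product automaton's output is well defined, i.e.\ that no input $n$ ends up in a state where two different $\facsmallab$-components are simultaneously accepting, and that no positive $n$ retains the sentinel output. Both facts follow immediately from the previous theorem: its {\tt Walnut} verification that {\tt smallfactest} is {\tt true} shows the disjunction covers all $n \geq 1$, and mutual exclusivity is automatic since $\lcce(n)$ is a single real number, so a length $n$ satisfying $\facsmallab(n)$ for one value $a/b$ cannot satisfy it for another. Hence the output assignment is consistent, and the $25$-state count is simply read off from the resulting minimal automaton. As with the earlier automata, one can sanity-check the construction by tabulating $\lcce(n)$ directly for the first few hundred $n$ and confirming agreement with the automaton of Figure~\ref{smallfac}.

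\begin{figure}[H]
\begin{center}
\includegraphics[width=6.5in]{lcce-tm.pdf}
\caption{Automaton for least circular critical exponent of length-$n$ factors of Thue-Morse}
\label{smallfac}
\end{center}
\end{figure}
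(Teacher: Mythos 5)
Your proposal is correct and follows essentially the same route as the paper: the authors likewise prove this theorem by combining the five $\facsmallab$ automata via the cross product construction (with unreachable-state removal) and reading the output off each product state. Your added remarks on well-definedness of the output and mutual exclusivity are sound supplements but not a different method.
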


\begin{figure}[H]
\begin{center}
\includegraphics[width=6.5in]{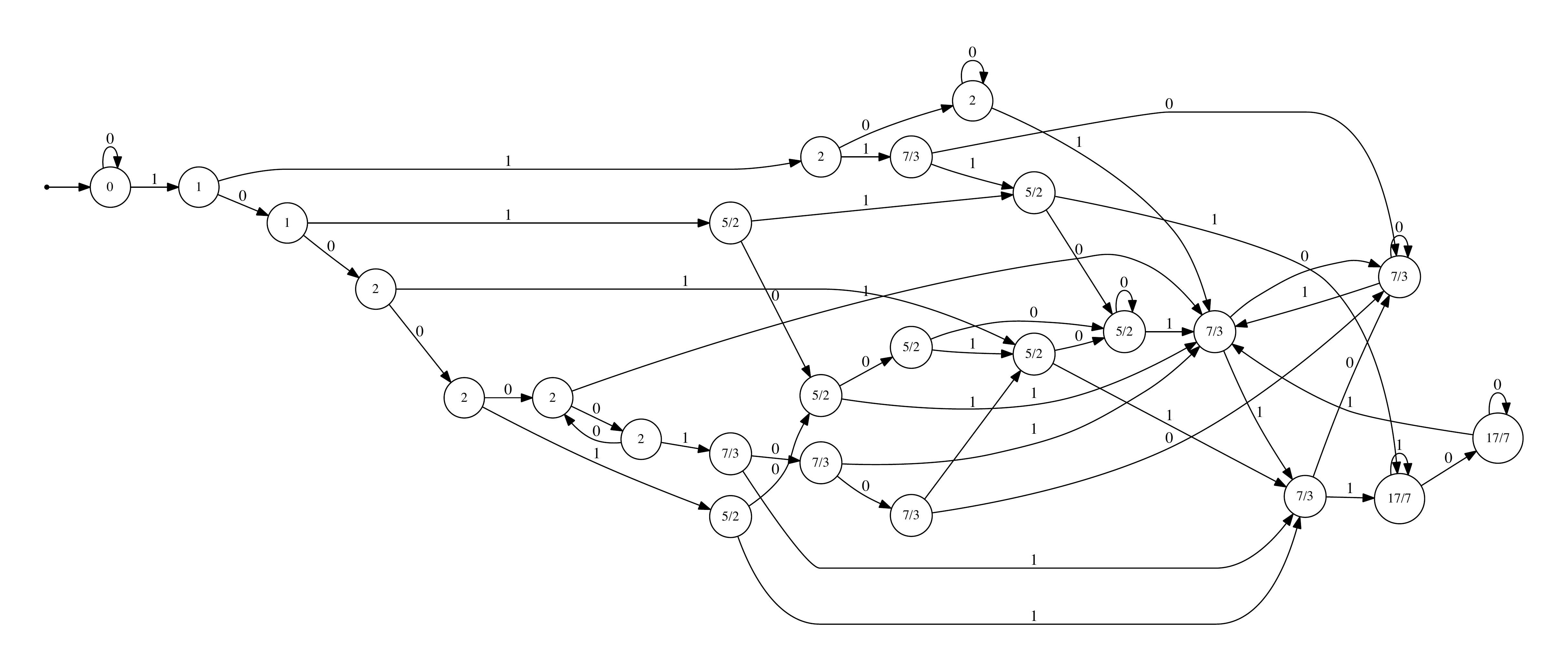}
\caption{Automaton computing least possible $\ccexp$ of a factor, for each length }
\label{smallfac}
\end{center}
\end{figure}

\begin{proof}
Combine, using the cross product construction, the five automata
$\facsmallab$ for $a/b \in \{1/1, 2/1, 7/3, 17/7, 5/2 \}$ 
as before.  The output of each state is depicted in the center of the
corresponding circle.
\end{proof}

\subsection{Greatest circular critical exponents for each length}

We can also consider the greatest circular critical exponent over all
factors ${\bf t}[s..s+n-1]$ of length $n$.  Define
$$ \gcce(n) = \max_{{x \text{ a factor of } {\bf t}} \atop {|x|=n}} \ccexp(x) .$$

\begin{theorem}
For all $n \geq 1$ we have
$\gcce(n) \in V$ where
$V := \{ 1, 2, 3, {7 \over 2}, 4 \}$.
\end{theorem}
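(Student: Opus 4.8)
The plan is to mimic exactly the structure used for $\lcce$, replacing the "least exponent" predicate with a "greatest exponent" predicate. First I would introduce, for each candidate exponent $a/b \in V = \{1, 2, 3, \tfrac72, 4\}$, a first-order formula $\facbigab(n)$ asserting that $a/b$ is the \emph{greatest} circular critical exponent over all length-$n$ factors: namely, that some length-$n$ factor achieves circular critical exponent exactly $a/b$, and that no length-$n$ factor has a factor (as a circular word) with exponent strictly exceeding $a/b$. In {\tt Walnut} syntax this is
$$ \facbigab(n) := \facab(n) \ \wedge \ (\forall s \ \neg \facgtab(n,s)), $$
where $\facab(n) := \exists s\ \faceqab(n,s)$ is already available from the $\lcce$ subsection, and $\facgtab$ was defined earlier in the Factors section. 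Note that the conjunct $\neg \facgtab(n,s)$ for all $s$ already implies that every length-$n$ factor has circular critical exponent $\leq a/b$, and combined with $\facab(n)$ this pins the maximum to exactly $a/b$.

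Next I would write out the five individual {\tt Walnut} commands {\tt facbig11}, {\tt facbig21}, {\tt facbig31}, {\tt facbig72}, {\tt facbig41}, each obtained by substituting the concrete integers $a,b$ and the appropriate $\facgtab$ automaton (e.g.\ {\tt facgt11}, {\tt facgt21}, {\tt facgt31}, {\tt facgt72}, {\tt facgt41}) built from $\crep$ exactly as in the Factors section. Then the theorem follows by evaluating
\medskip

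\noindent{\tt eval bigfactest "An (n>=1) => (\$facbig11(n) | \$facbig21(n) | \$facbig31(n) | \$facbig72(n) | \$facbig41(n))":}

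\medskip

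\noindent and checking that {\tt Walnut} returns {\tt true}. As with the earlier theorems, I would also verify that each of the five values in $V$ is actually attained by exhibiting a witnessing pair $(n,s)$ — or simply noting that the automaton for each $\facbigab$ accepts at least one $n$ — so that $V$ is exactly the image of $\gcce$ and not merely an over-approximation. A small table listing the state counts of the $\facgtab$ and $\facbigab$ automata, together with the first few $n$ accepted by each $\facbigab$, would accompany the proof in parallel with the $\lcce$ table.

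The only real obstacle is a bookkeeping one: I need the exponents appearing in $V$ to already be among those for which $\facgtab$ (equivalently $\faceqab$) was constructed in the main Factors theorem, so that all the component automata exist. Inspecting $U = \{1, 2, \tfrac73, \tfrac{17}{7}, \tfrac52, \tfrac{13}{5}, \tfrac83, 3, \tfrac{10}{3}, \tfrac72, \tfrac{11}{3}, 4\}$, we see $1, 2, 3, \tfrac72, 4 \in U$, so every needed $\facgtab$ automaton is indeed available and no new $\crep$-based construction is required. Beyond that, the argument is routine: correctness of the $\facbigab$ formulas is immediate from the definitions of $\facgtab$ and $\faceqab$, and the final {\tt eval} is a finite computation that {\tt Walnut} dispatches in seconds, just as for $\lcce$. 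I would close with the usual remark that the result was sanity-checked against a direct computation of $\gcce$ for several thousand short prefixes of $\bf t$.
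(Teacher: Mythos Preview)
Your proposal is correct and essentially identical to the paper's own proof: the paper defines $\faclargeab(n) = (\exists s\ \faceqab(n,s)) \wedge (\forall s\ \neg \facgtab(n,s))$, which is exactly your $\facbigab$, and then runs the same five-way disjunction through {\tt Walnut}. The only differences are cosmetic (the name {\tt faclarge} versus {\tt facbig}) and your added remarks about checking $V\subseteq U$ and the empirical sanity check, neither of which affects the argument.
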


\begin{proof}
We define
$$ \faclargeab(n) = (\exists s\ \faceqab(n,s)) \ \wedge \ 
	(\forall s\ \neg \facgtab(n,s)) .$$


The number of states, and the first few $n$ that match the category, are
given in Table~\ref{larget}.

We then verify the claim by writing

\medskip

\noindent{\tt eval largefactest "An (n>=1) => (\$faclarge11(n) | \$faclarge21(n) | \\
\$faclarge31(n) | \$faclarge72(n) | \$faclarge41(n))":}


\medskip

\noindent which returns {\tt true}.

\end{proof}

\begin{table}[H]
\begin{center}
\begin{tabular}{ccl}
      & number of  & \\
$a/b$ & states for & first few $n$ matching the case \\
      & faclargeab & \\
\hline
1/1 &  2 & 1  \\
2/1 &  3 & 2,3\\
3/1 &  8 & 4,5,9,13,15,17,21,25,29,33,37,41,45,49,53,57,61, \ldots \\
7/2 &  7 & 7,11,19,23,27,31,35,39,43,47,51,55,59,63, \ldots \\
4/1 &  5 & 6,8,10,12,14,16,18,20,22,24,26,28,30,32, \ldots
\end{tabular}
\end{center}
\caption{State sizes for $\faclargeab$}
\label{larget}
\end{table}

\begin{theorem}
There is a 9-state $2$-DFAO,
that on input $(n)_2$, returns the greatest circular
critical exponent over all length-$n$ factors of $\bf t$.
\end{theorem}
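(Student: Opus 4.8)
The plan is to reuse, essentially verbatim, the product construction already established for the two earlier DFAO theorems (the 29-state and 204-state automata), now applied to the five automata $\faclargeab$ for $a/b \in V = \{1/1,\, 2/1,\, 3/1,\, 7/2,\, 4/1\}$ that were just built in the proof of the preceding theorem. By that theorem, for every $n \geq 1$ exactly one of the predicates $\faclargeab(n)$ holds, and when it holds it certifies that $\gcce(n) = a/b$; so the five automata partition $\mathbb{N}_{\geq 1}$ into the five classes recorded in Table~\ref{larget}, and a machine that knows which class $(n)_2$ falls into knows $\gcce(n)$.

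Concretely, I would start with the automaton $\faclarge11$ and, iterating over the remaining four exponents in $V$, repeatedly form the cross product with the next $\faclargeab$ and delete states unreachable from the start state; by Corollary~\ref{cross} each intermediate machine is minimal. In the final product, each reachable state is a $5$-tuple of states of the constituent automata, and the correctness of the preceding theorem guarantees that at most one coordinate is in an accepting configuration along any input; I assign that state the output $a/b$ of the unique accepting coordinate (and, say, output $1$ on any never-reached junk, though minimality should remove such states). I would then minimize the resulting DFAO using the Myhill–Nerode procedure for DFAOs described earlier, and simply report that the minimal machine has $9$ states — this is a finite computation whose output is the claimed automaton, exhibited alongside the theorem just as Figures~\ref{figp} and~\ref{smallfac} were.

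The only genuine content beyond bookkeeping is the correctness check, and it is already discharged: the {\tt Walnut} evaluation of {\tt largefactest} returning {\tt true} shows the five cases are exhaustive, and the definition $\faclargeab(n) = (\exists s\ \faceqab(n,s)) \wedge (\forall s\ \neg \facgtab(n,s))$ shows that $\faclargeab(n)$ holding forces every length-$n$ factor to have circular critical exponent $\leq a/b$ with at least one attaining $a/b$, i.e. $\gcce(n) = a/b$. Mutual exclusivity of the cases is not logically needed for a well-defined output, but it follows automatically: if two distinct $\faclargeab(n)$ and $\faclargecd(n)$ both held, both $a/b$ and $c/d$ would equal $\gcce(n)$, a contradiction. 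So the assignment of outputs to states is unambiguous.

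The main (and minor) obstacle is purely computational: verifying that the minimized cross product really collapses to $9$ states, which is a routine {\tt Walnut}/automata-minimization run of the kind the paper reports takes seconds for Thue-Morse-derived automata. As before, I would add a sanity check — computing $\gcce(n)$ by brute force for the first few hundred $n$ and comparing against the automaton's output — and note the agreement. No new ideas are required; this theorem is the ``greatest'' analogue of the already-proved ``least'' statement, with $V$ in place of $T$ and $\faclargeab$ in place of $\facsmallab$.
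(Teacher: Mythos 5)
Your proposal matches the paper's proof: the paper likewise forms the cross product of the five $\faclargeab$ automata for $a/b \in \{1, 2, 3, \tfrac{7}{2}, 4\}$, assigns outputs from the unique accepting coordinate, and exhibits the resulting 9-state DFAO. Your added remarks on exhaustiveness and mutual exclusivity are correct but are exactly the content already discharged by the preceding theorem, so no new work is needed.
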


\begin{proof}
We follow the same approach as before, using the cross product construction
to combine the automata $\faclargeab$ for
$a/b \in \{1, 2, 3, {7 \over 2}, 4 \}$.  The result is depicted in
Figure~\ref{large}.
\end{proof}

\begin{figure}[H]
\begin{center}
\includegraphics[width=6.5in]{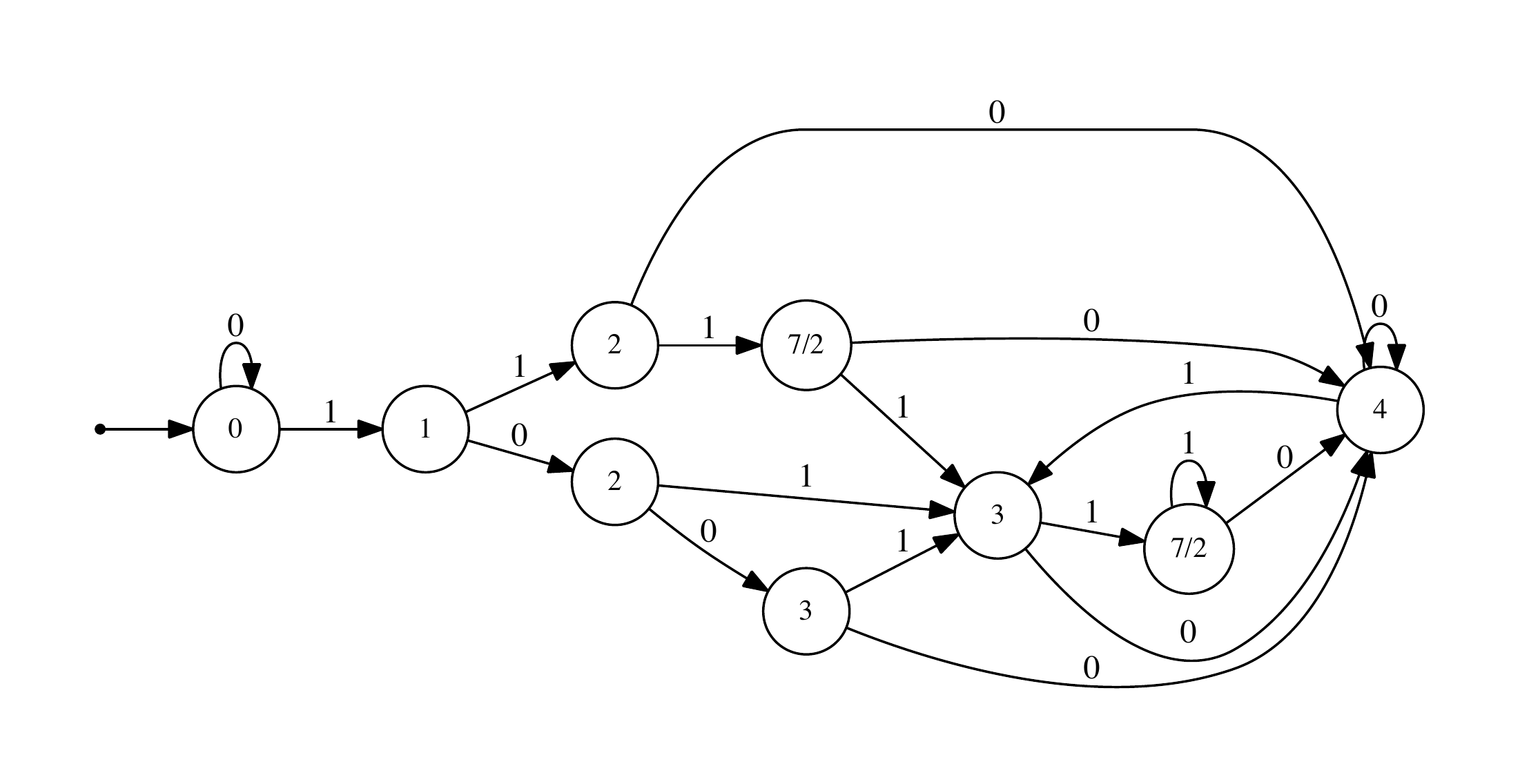}
\caption{Automaton computing greatest possible $\ccexp$ of a factor, for each length }
\label{large}
\end{center}
\end{figure}

\subsection{Sets of circular critical exponents}

We can get even more!  Define the set of all possible circular critical exponents of
factors of length $n$ as follows:
$$ \ace (n) = \{ \ccexp(x) \ : \ |x| = n \geq 1 \text{ and } x \text{ is a factor of {\bf t} } \} .$$

\begin{theorem}
The range of $\ace(n)$ consists of exactly $31$ distinct sets as
enumerated in Table~\ref{soce}.
\end{theorem}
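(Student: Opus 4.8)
The plan is to realize $\ace(n)$ as the output of a single finite automaton, exactly mirroring the method used for $\lcce$ and $\gcce$ but now tracking the \emph{entire set} of exponents attained rather than just its minimum or maximum. The key observation is that, by the Factors theorem, $\ace(n)$ is always a subset of the 12-element set $U$; hence each possible value of $\ace(n)$ is a subset of $U$, and there are only finitely many such subsets. First I would, for each exponent $a/b \in U$, form the automaton $\facab(n)$ (already available from the $\lcce$ section) which accepts $(n)_2$ exactly when some length-$n$ factor has circular critical exponent equal to $a/b$. The indicator bit of this automaton on input $(n)_2$ tells us whether $a/b \in \ace(n)$.

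Next I would take the cross product of the twelve automata $\facab$ over all $a/b \in U$, remove states unreachable from the start state, and equip each state with an output equal to the 12-bit vector recording which of the $\facab$ accept. This product automaton is well-defined and, by Corollary~\ref{cross} together with the incremental breadth-first construction described after it, is minimal; each reachable state's output vector is precisely the characteristic vector of $\ace(n)$ for all $n$ leading to that state. The number of \emph{distinct} output vectors appearing among the reachable states is then exactly the number of distinct sets in the range of $\ace$. Reading these off and listing them produces Table~\ref{soce}, and a mechanical check in {\tt Walnut} --- for instance verifying $\forall n\,(n \ge 1) \implies \bigvee_{S} (\ace(n) = S)$ where the disjunction runs over the 31 candidate sets, each ``$\ace(n)=S$'' being the conjunction of $\facab(n)$ for $a/b \in S$ and $\neg\,\facab(n)$ for $a/b \in U \setminus S$ --- confirms that no further sets occur and that all 31 do.

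The main obstacle I anticipate is not conceptual but computational and organizational: the cross product of twelve automata could in principle blow up, so in practice one builds it incrementally (adding only reachable states) as already recommended in the paper, and then one must carefully enumerate the resulting output vectors and match them against the 31 entries of Table~\ref{soce}. A secondary subtlety is making sure the $\facab$ automata behave correctly under leading zeros and that the degenerate small cases ($n=1,2$, etc.) are handled; but this is guaranteed by {\tt Walnut}'s conventions as noted in Section~2. One should also separately confirm, by exhibiting a witness length $n$ for each of the 31 sets, that every set in the table is actually realized --- this is the ``each possibility occurs'' check analogous to the one in the Factors theorem.

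\begin{proof}
For each $a/b \in U$ we have the automaton for $\facab(n) := \exists s\ \faceqab(n,s)$, which accepts $(n)_2$ precisely when $a/b \in \ace(n)$; since, by the Factors theorem, $\ace(n) \subseteq U$ always, the set $\ace(n)$ is completely determined by the twelve bits $(\facab(n))_{a/b \in U}$. We form the cross product of these twelve automata, keeping only states reachable from the start state, and label each state with the tuple of the twelve indicator outputs. By Corollary~\ref{cross} the result is minimal, and by construction the label of the state reached on input $(n)_2$ is the characteristic vector of $\ace(n)$ inside $U$. Enumerating the distinct labels occurring among the reachable states yields exactly the $31$ sets listed in Table~\ref{soce}. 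To confirm the count, one checks in {\tt Walnut} that
$$
\forall n\ (n \geq 1) \implies \bigvee_{S \in \mathcal{S}}
\Bigl( \bigwedge_{a/b \in S} \facab(n) \ \wedge \bigwedge_{a/b \in U \setminus S} \neg\,\facab(n) \Bigr)
$$
evaluates to true, where $\mathcal{S}$ is the collection of the $31$ candidate sets; and one verifies that each set in $\mathcal{S}$ is attained by exhibiting, for each, a length $n$ for which the corresponding conjunction holds. This establishes that the range of $\ace$ is exactly the $31$ sets of Table~\ref{soce}.
\end{proof}
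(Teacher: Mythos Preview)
Your proposal is correct and follows essentially the same approach as the paper: form the cross product of the twelve $\facab$ automata over $a/b \in U$, retain only reachable states, and read off the distinct output labels as the possible values of $\ace(n)$. The paper derives the $31$ sets exactly this way (as a corollary of building the $49$-state DFAO in the next theorem); your added {\tt Walnut} disjunction check and explicit witness lengths are sound extra verification but go slightly beyond what the paper does.
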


\begin{proof}
This follows immediately from our proof of the next result.
\end{proof}

\begin{theorem}
There is a 49-state $2$-DFAO that, on input $n$ written in base $2$, 
outputs $\ace(n)$.
\end{theorem}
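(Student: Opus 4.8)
The plan is to reuse the cross-product machinery already employed in the proofs for $\lcce$ and $\gcce$. For each exponent $a/b\in U$ we have already built (via \texttt{Walnut}) the automaton for the predicate $\facab(n):=\exists s\ \faceqab(n,s)$, which accepts exactly those $n$ for which some length-$n$ factor of $\mathbf{t}$ has circular critical exponent equal to $a/b$. Since, by the previous theorem, every factor of length $n\ge 1$ has circular critical exponent lying in $U$, the set $\ace(n)$ is completely determined by the twelve Boolean values $\bigl(\facab(n)\bigr)_{a/b\in U}$; indeed $a/b\in\ace(n)$ if and only if $\facab(n)$ holds, so $\ace(n)=\{\,a/b\in U:\facab(n)\text{ holds}\,\}$.

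First I would regard each minimal DFA for $\facab$ as a $2$-DFAO $M_{a/b}$ with Boolean output (output $1$ meaning ``$a/b\in\ace(n)$'', output $0$ meaning ``$a/b\notin\ace(n)$''). Next I would form the iterated cross product $M=M_{1/1}\times M_{2/1}\times\cdots\times M_{4/1}$ over all twelve exponents, using the construction of the Lemma and Corollary~\ref{cross}: start from the joint initial state and add only reachable states by breadth-first search, so that by Corollary~\ref{cross} the resulting automaton is already minimal. Finally I would relabel the output alphabet, which is a subset of $\{0,1\}^{12}$, as subsets of $U$ by setting $\tau([q_1,\dots,q_{12}])$ equal to the set of those $a/b$ whose corresponding coordinate state has output $1$. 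The automaton $M$ then outputs exactly $\ace(n)$ on input $(n)_2$, and carrying out the computation shows that $M$ has $49$ states; counting the distinct output labels appearing on these $49$ states yields the $31$ sets of Table~\ref{soce}, which also establishes the preceding theorem.

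The construction is essentially routine, and the only genuine obstacle is computational bookkeeping rather than mathematics: as with the $29$-state and $204$-state automata above, \texttt{Walnut} in its current form cannot directly emit a DFAO over a twelve-symbol output alphabet, so the final product-and-relabel step must be performed by a short external program acting on the twelve automata that \texttt{Walnut} produces. One should also note the degenerate input $n=0$, for which there is no factor; since every $\facab$ formula implicitly forces $m\le n$, $p\ge 1$, and $i<s+n$, the value of $M$ at $n=0$ is the empty set, which does no harm. As a sanity check one can compare the output of $M$ against $\ace(n)$ computed by brute force for all small $n$; the resulting $49$-state automaton is available at \codelink.
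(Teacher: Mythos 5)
Your proposal is correct and follows essentially the same route as the paper: form the cross product of the twelve minimal automata for $\facab$ over all $a/b\in U$, keep only reachable states (so minimality follows from Corollary~\ref{cross}), and read off $\ace(n)$ from the twelve-bit output label of each state. The extra remarks about $n=0$ and the $31$ distinct output sets are consistent with what the paper does.
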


\begin{proof}
We use the same cross product automaton technique as before.  This time,
we use the automata $\facab$ for each $a/b \in U$.  The result is depicted in
Figure~\ref{big}.
The outputs associated with each state are encoded as $12$-bit
numbers, one for each of the 12 possible exponents in increasing order,
with least significant bit corresponding to exponent $4$.
Square states are ``transient'' and circular states are ``recurrent''.
\end{proof}

\section{Final remarks}

Evidently one could (in principle)
perform the same sort of analysis for many other famous infinite words.
We carried this out for the
regular paperfolding word 
$${\bf p} = 00100110001101100010 \cdots$$
(see, for example,
\cite{MendesFrance&Tenenbaum:1981,Dekking&MendesFrance&vanderPoorten:1982}),
and the results
are summarized below.  We omit the details, but the {\tt Walnut} code
proving these results is available at \codelink \, .  The computations were nontrivial.
{\tt Walnut} was invoked using the Linux command

\medskip

\centerline{\tt  java -Xmx16000M -d64 Main.prover} 

\medskip

\noindent on a 4 CPU AMD Opteron 6380 SE with 256GB RAM.  The analogue of
$\crep$ for $\bf p$ has 4226 states and took 9 minutes to compute. The largest 
intermediate automaton had 822,161 states.

\begin{theorem}
\leavevmode
\begin{itemize}
\item[(a)]  Every nonempty prefix of $\bf p$ has circular critical exponent
lying in $\{ 1,2,{7 \over 3}, 3, {{10} \over 3},4, {{13} \over 3}, 5 \}$.

\item[(b)]  Every nonempty factor of $\bf p$ has circular critical exponent
lying in \\ $\{1, 2, {7 \over 3}, {5 \over 2}, {8 \over 3}, {{11} \over 4}, 3,
{{10} \over 3}, {7 \over 2}, 4, {{13} \over 3}, 5, 6 \}$.

\item[(c)] The least circular critical exponent of $\bf p$, over all factors of
length $n$, lies in \\
$\{1, 2, {7 \over 3}, {5 \over 2}, {8\over 3}, {{11} \over 4}, 3\}$.

\item[(d)] The greatest circular critical exponent of $\bf p$, over all factors of
length $n$, lies in $\{ 1, 2, 3, 4, 5, 6 \}$.

\item[(e)]  There are exactly 16 distinct possible sets of circular
critical exponents for factors of length $n \geq 1$ of $\bf p$.
\end{itemize}
\end{theorem}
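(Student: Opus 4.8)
The plan is to replay, essentially verbatim, the machinery of Sections~3--5 and their subsections with the Thue--Morse word $\bf t$ replaced by the regular paperfolding word $\bf p$. Since $\bf p$ is $2$-automatic, generated by a small $2$-DFAO (see \cite{Dekking&MendesFrance&vanderPoorten:1982}), it can be loaded into {\tt Walnut} as a sequence, call it {\tt PF}, with the same conventions (indices from $0$, base $2$, most significant digit first). The first concrete step is to build the analogue of $\crep$ for $\bf p$: one takes the three-part formula $\crep(i,m,n,p,s)$ of Section~3 \emph{unchanged} and simply substitutes ${\bf p}[\,\cdot\,]$ for ${\bf t}[\,\cdot\,]$ throughout. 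As already noted above, the resulting minimal automaton has $4226$ states and takes about nine minutes to compute; it plays for $\bf p$ exactly the role that the $1423$-state $\crep$ automaton plays for $\bf t$, and every subsequent predicate is built on top of it.

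For part~(a) I would then define, for each candidate exponent $a/b$ in $\{1,2,\frac{7}{3},3,\frac{10}{3},4,\frac{13}{3},5\}$, the predicates $\prefgeab(n)$, $\prefgtab(n)$, $\prefeqab(n)$ exactly as in Section~4 (but now using the paperfolding $\crep$), and verify that the single {\tt Walnut} command asserting, for all $n\ge 1$, the disjunction of $\prefeqab(n)$ over these eight exponents evaluates to {\tt true}. Parts~(b)--(d) are handled the same way using the two-variable predicates of Section~5: for (b) one uses $\faceqab(n,s)$ over the thirteen exponents in $\{1,2,\frac{7}{3},\frac{5}{2},\frac{8}{3},\frac{11}{4},3,\frac{10}{3},\frac{7}{2},4,\frac{13}{3},5,6\}$ and checks $\forall n\,(n\ge1)\Rightarrow\forall s\,(\bigvee \faceqab(n,s))$; for (c) one uses $\facsmallab(n):=\facab(n)\wedge(\forall s\;\facgeab(n,s))$ with $\facab(n):=\exists s\;\faceqab(n,s)$, over $\{1,2,\frac{7}{3},\frac{5}{2},\frac{8}{3},\frac{11}{4},3\}$; and for (d) one uses $\faclargeab(n):=(\exists s\;\faceqab(n,s))\wedge(\forall s\;\neg\facgtab(n,s))$ over $\{1,2,3,4,5,6\}$. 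In each case the claim reduces to a single {\tt Walnut} evaluation returning {\tt true}, together with the easy check that each listed value is actually attained for some $(n,s)$.

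For part~(e), I would form, via the cross-product construction of Corollary~\ref{cross}, the $2$-DFAO obtained by combining the $\facab$ automata for all thirteen exponents of part~(b); after removing unreachable states and reading off the distinct output labels, the claim is that exactly $16$ distinct subsets occur, and Theorem~(a)--(e) part~(e) is precisely this count. The conceptually nontrivial preliminary work is \emph{discovering} the right finite candidate sets in the first place: unlike $\bf t$, which is overlap-free and therefore constrained to a tiny list of exponents a priori, $\bf p$ has a larger critical exponent and no comparably short list is evident, so one first builds the cross-product DFAO over a generous family of rationals $a/b$ (all fractions with small denominator up to a bound), reads off from its outputs which exponents actually occur, and only then states and verifies the trimmed universally quantified formulas above. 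I expect the main obstacle to be purely computational: as noted above, the largest intermediate automaton in this analysis has $822{,}161$ states, so the runs require a machine with large memory (invoked as {\tt java -Xmx16000M -d64 Main.prover}), and the practical heart of the proof is ordering the conjuncts and the successive product constructions, with aggressive minimization at every stage, so that no intermediate determinization exceeds what is feasible. As in the Thue--Morse case, I would finish with an independent sanity check, comparing the DFAO outputs against $\ccexp$ computed by brute force on many prefixes and on many randomly chosen factors of $\bf p$.
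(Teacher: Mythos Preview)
Your proposal is correct and follows exactly the approach the paper takes: the paper itself says only that the same analysis as for $\bf t$ was ``carried out'' for $\bf p$, omits the details, and points to the {\tt Walnut} code, noting the same computational parameters (the 4226-state $\crep$ analogue, the 822{,}161-state intermediate automaton, and the large-memory invocation) that you reproduce. If anything, your write-up is more explicit than the paper's, which gives no formulas for the paperfolding case at all.
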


In principle, we could also treat the Rudin-Shapiro sequence.
For example, one might be able to prove
the following.

\begin{conjecture}
Every nonempty factor of the Rudin-Shapiro sequence has a circular critical
exponent lying in
$$\{ 1, 2, {5 \over 2}, {8 \over 3}, 3, {{10} \over 3}, {7 \over 2},
{{11} \over 3}, {{15} \over 4}, 4, {{21} \over 5}, {{13} \over 3},
{{14} \over 3}, 5, 6, 7, 8 \}.$$
\end{conjecture}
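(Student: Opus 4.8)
The plan is to repeat, essentially verbatim, the {\tt Walnut}-based argument carried out above for $\bf t$ and for the paperfolding word $\bf p$. The Rudin--Shapiro sequence $\bf r$ is $2$-automatic, computed by a small DFAO (built in to {\tt Walnut} as {\tt RS}), so the first step is to form the analogue of $\crep$ by substituting {\tt RS} for {\tt T} throughout the definition: this gives a predicate {\tt crepRS}$(i,m,n,p,s)$ asserting that the circular word formed from ${\bf r}[s..s+n-1]$ has a factor of length $m$ and (not necessarily least) period $p$ beginning at position $i$. The three-case combinatorial analysis underlying $\crep$ (both endpoints of the repetition in the first copy of the word; the repetition straddling the boundary; both endpoints in the second copy) does not depend on which automatic sequence is plugged in, so the formula itself needs no change --- only the reading direction of the {\tt RS} automaton, and its output alphabet, must be checked against the conventions used for {\tt T}, with at most cosmetic syntactic adjustments. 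From {\tt crepRS} one then builds, for each candidate exponent $a/b$ in
\[ U_{\bf r} = \bigl\{\, 1,\ 2,\ \tfrac52,\ \tfrac83,\ 3,\ \tfrac{10}{3},\ \tfrac72,\ \tfrac{11}{3},\ \tfrac{15}{4},\ 4,\ \tfrac{21}{5},\ \tfrac{13}{3},\ \tfrac{14}{3},\ 5,\ 6,\ 7,\ 8 \,\bigr\}, \]
the predicates $\facgeab$, $\facgtab$, and $\faceqab$ exactly as in the treatment of factors of $\bf t$ (with {\tt crepRS} in place of $\crep$), and finally one performs a single {\tt Walnut} evaluation of the statement that for every $n \geq 1$ and every $s$, at least one of the seventeen alternatives $\faceqab(n,s)$ holds --- the exact analogue of {\tt testfac} --- and checks that it returns {\tt true}. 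That yields the containment. To confirm the set is exactly right rather than a strict over-approximation, one also checks $\exists n,s\ \faceqab(n,s)$ for each $a/b \in U_{\bf r}$, reading off a first occurrence by leaving $n,s$ free, as in the tables above.

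The main obstacle --- and the reason the statement is posed here only as a conjecture --- is purely computational feasibility. For $\bf t$ the automaton for $\crep$ already has $1423$ states, and for $\bf p$ its analogue has $4226$ states, built in nine minutes on a $256$\,GB machine with a largest intermediate automaton of $822{,}161$ states. For $\bf r$ one should expect both figures to grow: the Rudin--Shapiro DFAO is larger than the two-state Thue--Morse DFAO, and $U_{\bf r}$ has seventeen members, several of them ($\tfrac{21}{5}$, $\tfrac{15}{4}$, $\tfrac{13}{3}$, $\tfrac{14}{3}$) with large denominators. As the tables above show, it is precisely the ``thin'' exponents with large numerator and denominator (such as $\tfrac{17}{7}$ and $\tfrac52$ for $\bf t$) whose $\facgeab$/$\facgtab$ automata are biggest, because eliminating the three existential quantifiers by the subset construction is where the blow-up lives; so the real risk is that the intermediate nondeterministic automaton for {\tt facgtRS} or {\tt facgeRS} at one of the large-denominator exponents exceeds available memory. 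Standard mitigations apply: run with as much RAM as possible; introduce the three quantifiers one at a time with a minimization step between each; reorder the conjuncts so that the most restrictive constraints ($m \leq n$, $bm \geq ap$) are imposed before {\tt crepRS}; and, if need be, split the verification of the final disjunction over residue classes of $n$ modulo a small power of $2$, or dispatch the largest exponents in separate runs.

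As with the earlier theorems, the truth of the final statement --- and, as a bonus, a single DFAO outputting $\ccexp({\bf r}[s..s+n-1])$ on input $(n,s)$, together with the analogues of $\lcce$, $\gcce$, and $\ace$ obtained by the same cross-product technique --- should be cross-checked against a brute-force computation of $\ccexp$ for several hundred short factors of $\bf r$. It is reassuring that the conjectured maximum, $8$, is exactly twice $4$, consistent with the general bound $\ccexp(w) \leq 2\,\cexp(w)$ noted after the definition of $\ccexp$ and with the fact that, unlike $\bf t$, the word $\bf r$ contains powers well above $2$; this makes the full list in $U_{\bf r}$ a priori plausible, and I expect that the only thing standing between it and a theorem is a sufficiently large machine.
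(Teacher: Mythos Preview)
The statement you are addressing is a \emph{conjecture} in the paper, not a theorem: the paper explicitly says ``so far we have not been able to complete the computations with {\tt Walnut} (it runs out of space).'' There is therefore no proof in the paper to compare against. Your write-up is not a proof either, and you say as much; it is a description of the method one would use, together with an honest account of why the computation has not gone through. On that level your proposal is entirely in line with the paper: the approach you describe (build the analogue of $\crep$ for {\tt RS}, form $\facgeab$, $\facgtab$, $\faceqab$ for each candidate $a/b$, and evaluate the seventeen-fold disjunction) is exactly the method the paper applies to $\bf t$ and $\bf p$ and would apply to $\bf r$ if memory permitted, and your diagnosis of the bottleneck (the subset construction when eliminating the existential quantifiers, worst at the large-denominator exponents) is reasonable and consistent with the paper's remark about running out of space.

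Two small cautions. First, be careful with the aside that the conjectured maximum $8$ is ``exactly twice $4$, consistent with the general bound $\ccexp(w) \leq 2\,\cexp(w)$'': the paper only observes that $\ccexp(w)$ can be as much as twice $\cexp(w)$, not that this is a universal upper bound, and in any case the critical exponent of the Rudin--Shapiro sequence is not $4$, so this heuristic does not actually support the value $8$. Second, your mitigations (staged quantifier elimination with interleaved minimization, reordering conjuncts, splitting on residue classes of $n$) are sensible engineering suggestions, but none of them is guaranteed to bring the computation within reach; until someone actually completes the run and it returns {\tt true}, the statement remains a conjecture, which is precisely the status the paper gives it.
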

\noindent However, so far we have not been able to complete the computations
with {\tt Walnut} (it runs out of space).

For some infinite words, the sets under
consideration will be infinite, and hence another kind of analysis will be needed.
As an example, consider the infinite word $210201 \cdots$ that is a fixed point of
$2 \rightarrow 210$, $1 \rightarrow 20$, $0 \rightarrow 1$.  It is well-known that
this word is squarefree, but contains factors with exponent arbitrarily close to $2$.
In this case there will be no finite analogue of our Proposition~\ref{prop1} and 
Theorem~\ref{two}.
The same case occurs for the Fibonacci word (the fixed point of $0 \rightarrow 01$ and
$1 \rightarrow 0$).

\section{Acknowledgments}

The first author acknowledges, with thanks,
conversations with Daniel Go\v{c} on the subject in 2013.

\begin{table}[H]
\setlength{\extrarowheight}{0.1cm}
\begin{center}
\begin{tabular}{c|c|c}
set of circular & encoding & first few $n$ for \\
critical exponents $S$ & in automaton &  $\ace(n) = S$ \\
\hline
 $\{1\}$ & 2048 & $\{1\}$  \\
 $\{1,2\}$ & 3072 & $\{2\}$ \\
 $\{2\}$ & 1024 &  $\{3\}$ \\
 $\{2,3\}$ & 1040 & $\{4\}$ \\
 $\{{5 \over 2}, 3\}$ & 144 & $\{5\}$ \\
 $\{ 2,4 \}$ & 1025 & $\{ 6 \}$ \\
 $\{ {7 \over 3}, 3, {7 \over 2} \}$ & 532 & $\{7\}$ \\
 $\{ 2,3,4 \}$ & 1041 & $\{ 8,12,16,24,32,48,64,96,128,192, \ldots \}$ \\
 $\{ {5 \over 2}, {8 \over 3}, 3 \}$ & 176 & $\{ 9, 15 \}$ \\
 $\{ {5 \over 2}, 3, {{10} \over 3}, 4 \}$ & 153 & $\{10, 20, 40,80, 160, \ldots \}$  \\
 $\{ {5 \over 2}, 3, {7 \over 2} \}$ & 148 & $\{ 11 \}$ \\
 $\{ {7 \over 3}, {{13}\over 5}, {8 \over 3}, 3 \}$ & 624 & $\{ 13 \}$ \\
 $\{ {7 \over 3}, 3, {7 \over 2}, 4 \}$ & 533 & $\{ 14,28,56,112,224, \ldots \}$ \\
 $\{ {5 \over 2}, {{13} \over 5}, {8 \over 3}, 3 \}$ & 240 & $\{ 17,41,137,\ldots \}$ \\
 $\{ {5 \over 2}, {8 \over 3}, 3, {{11} \over 3}, 4 \} $ & 179 & $\{18,30,36,60,72,120,144, \ldots \}$ \\
 $\{ {7 \over 3}, {8 \over 3}, 3, {7 \over 2} \}$ & 564 & $\{ 19,67, \ldots \} $ \\
 $\{ {7 \over 3}, {5 \over 2}, {8 \over 3}, 3 \}$ & 688 & $\{ 21 \}$ \\
 $\{ {5 \over 2}, 3, {7 \over 2}, {{11} \over 3}, 4 \} $ & 151 & $\{ 22,44,88,176, \ldots \}$ \\
 $\{ {{17} \over 7}, {5 \over 2}, 3, {7 \over 2} \} $ & 404 & $\{ 23,71, \ldots \}$ \\
 $\{ {7 \over 3}, {5 \over 2}, {{13} \over 5}, {8 \over 3}, 3 \}$  & 752 & 
 	$\{ 25,29,33,37,45,49,53,57,61,65,69,73,77, \ldots \}$ \\
 $\{ {7 \over 3}, {{13} \over 5}, {8 \over 3}, 3, {{10} \over 3}, 4 \} $ & 633 & 
 	$\{ 26,52,104,208, \ldots \}$ \\
 $\{ {7 \over 3}, {5 \over 2}, {8 \over 3}, 3, {7 \over 2} \}$ & 692 & 
 	$\{ 27,35,43,51,59,75,83,91,99,107,115,123, \ldots \}$ \\
 $\{ {{17} \over 7}, {5 \over 2}, {8 \over 3}, 3, {7 \over 2} \}$ & 436 & 
 	$\{ 31,39,47,55,63,79,87,95,103,111,119,127, \ldots \}$  \\
 $\{ {5 \over 2}, {{13} \over 5}, {8 \over 3}, 3, {{10} \over 3}, {{11} \over 3}, 4 \}$ & 
  251 & $\{ 34,68,82,136,164, \ldots \}$ \\
 $\{ {7 \over 3}, {8 \over 3}, 3, {7 \over 2}, 4 \}$ & 565 & $\{ 38,76,134,152, \ldots \}$ \\
 $\{ {7 \over 3}, {5 \over 2}, {8 \over 3}, 3, {{10} \over 3}, {{11} \over 3}, 4 \} $ & 699 &
 	$\{ 42,84,168, \ldots \}$ \\
 $\{ {{17} \over 7}, {5 \over 2}, 3, {7 \over 2}, {{11} \over 3}, 4 \}$ & 407 & 
 	$\{46,92,142,184, \ldots \}$ \\
 $\{ {7 \over 3}, {5 \over 2}, {{13} \over 5}, {8\over 3}, 3, {{10} \over 3},
{{11} \over 3}, 4 \}$ & 763 & $\{ 50,58,66,90,98,100,106,114,116,122, \ldots \}$ \\
 $\{ {7 \over 3}, {5 \over 2}, {8\over 3}, 3, {7 \over 2}, {{11} \over 3}, 4 \} $ & 695 & 
 	$\{ 54,70,86,102,108,118, \ldots \}$  \\
 $\{ {{17} \over 7}, {5 \over 2}, {8 \over 3}, 3, {7 \over 2}, {{11} \over 3}, 4 \}$ & 
 	439 & $\{62,78,94,110,124,126, \ldots \}$ \\
 $\{ {7 \over 3}, {5 \over 2}, {{13} \over 5}, {8 \over 3}, 3, {{10} \over 3}, 4 \}$ & 
 	761 & $\{74,148, \ldots \}$
\end{tabular}
\end{center}
\caption{Sets of circular critical exponents for lengths $n$}
\label{soce}
\end{table}

\begin{landscape}

\begin{figure}[H]
\begin{center}
\includegraphics[width=7.5in]{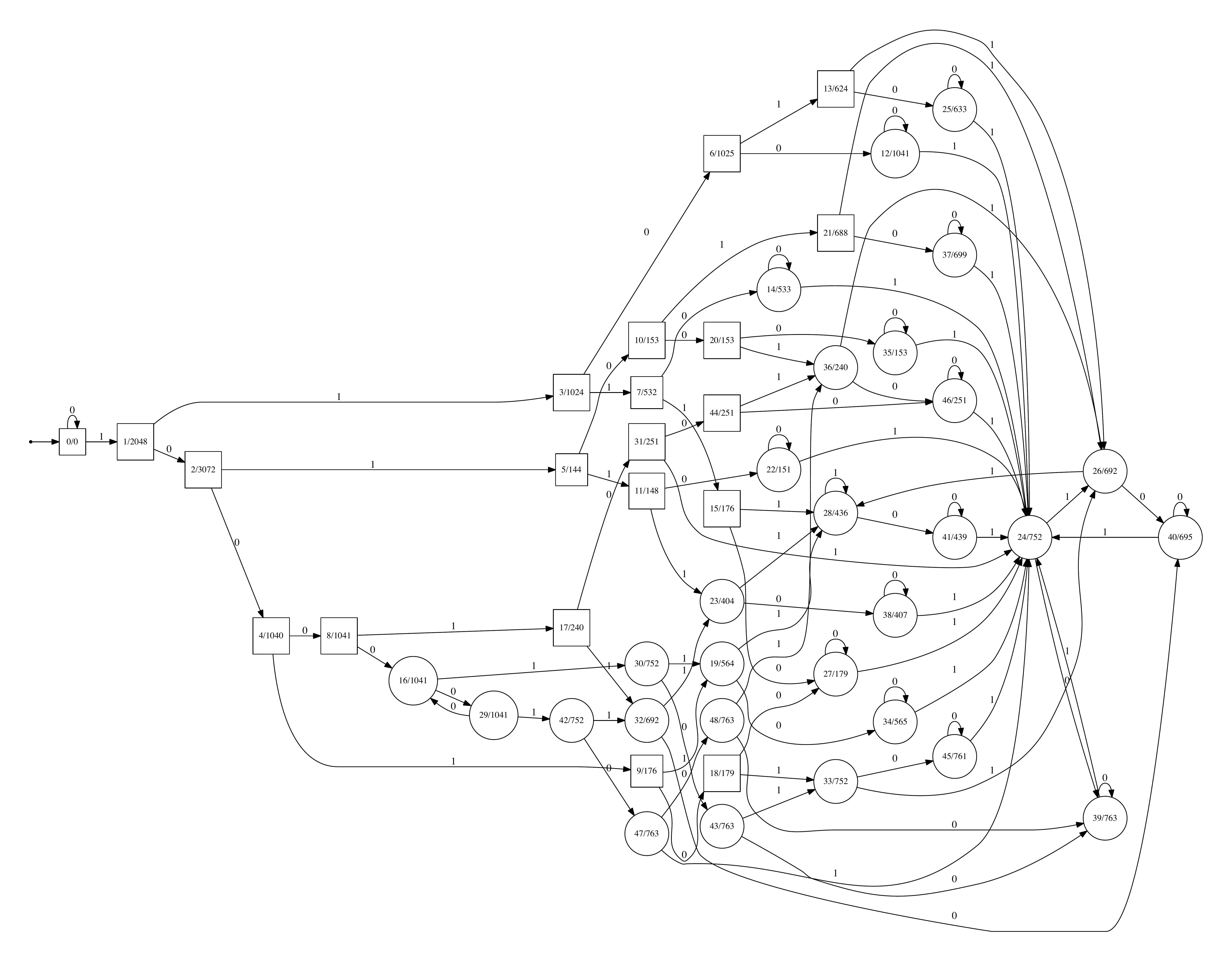}
\caption{Automaton computing sets of circular critical exponents for
factors of length $n$}
\label{big}
\end{center}
\end{figure}

\end{landscape}

\end{document}